\documentclass[12pt]{article}
\pdfoutput=1 
\usepackage[T1]{fontenc}
\usepackage[utf8]{inputenc}
\usepackage[english]{babel}
\usepackage{hyperref}
\usepackage{siunitx}
\usepackage{multicol}
\usepackage{graphicx}
\usepackage{amsmath}
\usepackage{amssymb}
\usepackage{multirow}
\usepackage{bbm}
\usepackage{bm}
\usepackage{authblk}
\usepackage{color}
\usepackage[round]{natbib}
\bibliographystyle{apalike}
\usepackage{fancyhdr} 
\usepackage{subfig}
\usepackage{float}
\usepackage{textgreek}
\usepackage{mathtools}
\usepackage{paralist}
\usepackage{adjustbox}
\usepackage{bbm}
\usepackage{booktabs,ctable,threeparttable}
\usepackage{enumitem}
\usepackage{amsthm}
\usepackage{pifont}
%
%

\addtolength{\oddsidemargin}{-.5in}%
\addtolength{\evensidemargin}{-.5in}%
\addtolength{\textwidth}{1in}%
\addtolength{\textheight}{-0.0in}
\addtolength{\topmargin}{-.4in}%

\newcommand{\diagdots}[3][-25]{%
	\rotatebox{#1}{\makebox[0pt]{\makebox[#2]{\xleaders\hbox{$\cdot$\hskip#3}\hfill\kern0pt}}}}

\theoremstyle{plain} \newtheorem{theorem}{Theorem}
\theoremstyle{plain} \newtheorem{lemma}{Lemma}
\theoremstyle{plain} \newtheorem{corollary}{Corollary}
\theoremstyle{remark}\newtheorem{remark}{Remark}
\theoremstyle{plain}\newtheorem{example}{Example}

\usepackage{paralist} 
\DeclareMathAlphabet{\mathpzc}{OT1}{pzc}{m}{it}

\allowdisplaybreaks

\DeclareMathOperator*{\argmin}{arg\,min}
\date{\vspace{-5ex}}
\begin{document}
	\def\spacingset#1{\renewcommand{\baselinestretch}%
		{#1}\small\normalsize} \spacingset{1}
	\title{\bf A unified analysis of regression adjustment in randomized experiments}
	\author{Katarzyna Reluga,\thanks{Division of Biostatistics, School of Public Health, 		University of California, Berkeley, U.S.A.  E-mail: katarzyna.reluga@berkeley.edu.}\;
    Ting Ye\thanks{Department of Biostatistics, University of Washington, 3980 15th Avenue NE, Box 351617, Seattle, WA 98195, U.S.A.  E-mail: tingye1@uw.edu.}\;
	and\;Qingyuan Zhao\thanks{Department of Pure Mathematics and Mathematical Statistics, University of Cambridge, Centre for Mathematical Sciences, Wilberforce Road, Cambridge CB3 0WB, U.K. E-mail: qyzhao@statslab.cam.ac.uk.\\
   {{The authors gratefully acknowledge support from} the Swiss National Science Foundation for the project P2GEP2-195898.}}
	}
	\maketitle

	\begin{abstract}
		 Regression adjustment is broadly applied in randomized trials under
		the premise that it usually improves the precision of a treatment
		effect estimator. However, previous work has shown that this is not
		always true. To further understand this phenomenon, we develop a
		unified comparison of the asymptotic variance of a class of linear
		regression-adjusted estimators. Our analysis is based on the
		classical theory for linear regression with heteroscedastic errors
		and thus does not assume that the postulated linear model is correct. For
		a completely randomized binary treatment, we provide 
		sufficient conditions under which some regression-adjusted
		estimators are guaranteed to be more asymptotically efficient than
		others. 
		We explore other settings such as general treatment assignment mechanisms and generalized linear models, and find that the variance dominance phenomenon no longer occurs.
	
	\end{abstract}
	
	\noindent%
	{\it Keywords:}  Average treatment effect; Randomized controlled trials; Covariate adjustment; Heteroscedasticity. 
	\vfill
	
	\newpage

\section{Introduction}
\label{sec:main-intro}

Randomized experiments are 
the gold standard to answer questions
about causality. Many researchers use multiple linear regression with a treatment indicator and some baseline covariates 
to analyze randomized experiments, in which the treatment coefficient is often interpreted as a causal effect. In some fields, this is known as
the ``analysis of covariance'' (ANCOVA), which was first proposed by
\citet{fisher1932statistical} to unify ``two very widely applicable
procedures known as regression and analysis of variance''.
This common practice is motivated by the belief that regression
adjustments can increase precision if covariates in the regression are
predictive of the outcome.

However, as pointed out by many authors,
this is not always true especially when there is a lot of treatment
effect heterogeneity.
Regression adjustment in randomized experiments has been studied in two different frameworks, namely the finite-population
potential outcome model
\citep{neyman1923application,rubin1974estimating} and the
super-population model that assumes the experimental units are drawn
independently from an infinite population \citep[see e.g.][Chapter
7]{imbens-rubin-2015}. 
Three estimators have been extensively studied in the literature: the simple
difference-in-means or analysis of variance (ANOVA) estimator; the
ANCOVA estimator that includes covariate main effects; and the
regression-adjusted estimator that includes covariate main effects and
all treatment-covariate interactions. The last one is termed as the
analysis of heterogeneous covariance (ANHECOVA) estimator by
\citet{ye2020principles}. The main conclusions about the asymptotic efficiency of these
estimators are the same, regardless of whether the potential outcome model
\citep{freedman2008regression,FREEDMAN2008180,schochet2010regression,lin2013agnostic,guo-basse}
or super-population model
\citep{koch1998issues,yang2001efficiency,tsiatis2008covariate,schochet2010regression,rubin2011targeted,ye2020principles}
is used. 
Consider two estimators $\hat{\beta}_1$ and
$\hat{\beta}_2$ that converge to the same limit. 
We say that $\hat{\beta}_1$ (asymptotically)
\emph{uniformly dominates} $\hat{\beta}_2$ if the (asymptotic) variance of
$\hat{\beta}_1$ is always smaller or equal than that of $\hat{\beta}_2$, no
matter what the underlying distribution is. In both the potential
outcome model and the super-population model, it has been found that
ANHECOVA uniformly dominates the other two, but, somewhat
surprisingly, ANCOVA does not uniformly dominate ANOVA. 

A major limitation of the existing 
analysis of regression adjustment is
that the investigations are restricted to specific estimators and
provides limited insights into the phenomenon of uniform dominance. 
The variance calculations are often quite
technical, which further make the theoretical results less
accessible to practitioners. Furthermore, the existing literature does not tell us whether including all treatment-covariate interactions is preferred in other cases such as stratified experiments and generalized linear models.

In this article, we provide a unified analysis for a large class of
linear-regression adjusted estimators. Besides the estimators
mentioned above, our theory also applies to regression estimators with
some coefficients fixed (such as the difference-in-differences
estimator) or with treatment-covariate interactions only. By a simple
application of the textbook theory for linear regression with
heteroscedastic errors, this analysis not only recovers the known
relationships between ANOVA, ANCOVA, and ANHECOVA, but also
immediately provides a
sufficient 
condition for uniform dominance when the
expectation of the covariates is known (see Theorem~\ref{theorem:var-order} below). In
the more practical situation when the covariate expectation is
unknown, a slightly different sufficient 
condition is obtained (see Theorem~\ref{theorem:order-non-cent} below). This condition shows that, for example, the so-called lagged-dependent-variable regression
estimator is more efficient than the difference-in-differences
estimator in randomized experiments, despite them having a bracketing
relationship in observational studies \citep{ding-li-2019}. This
unified analysis allows us to explore whether
the uniform dominance extends to more complicated settings and provide
numerical counterexamples. Some further remarks are provided at the
end of this article, whereas proofs of the technical Lemmas can be found in Appendix \ref{sec:appendix}. 

\section{Linear regression adjustment in randomized trials}
\label{sec:modelling_frame}
Consider a random sample $\{(A_i, X_i, Y_i)\}_{i=1}^n$ of $n$ units,
where $A_i \in \{0,1\}$ is a binary treatment indicator, $X_i=(X_{i1},
X_{i2}, \dots,X_{ip})^T \in \mathbb{R}^p$ is a vector of unit
covariates observed before treatment assignment, and $Y_i \in
\mathbb{R}$ is a real-valued outcome of the unit. We assume that
$(A_i,X_i^T,Y_i), i=1,\dots,n$ is independent and identically
distributed, which is often a good approximation when the units are
randomly sampled from a large population. To simplify the notation, we
drop the subscript $i$ when referring to a generic unit from the
population.

Unless mentioned otherwise, we assume that each unit receives the
treatment independently with equal probability $\mathrm{pr}(A = 1 \mid
X) = \pi$, where $0 < \pi < 1$ is a known constant. In other words,
treatment is assigned by a simple
Bernoulli trial, which approximates random sampling without
replacement that is often studied in the finite-population model
\citep{freedman2008regression,FREEDMAN2008180,lin2013agnostic}. Under
this assignment mechanism and standard assumptions in causal
inference, the average treatment effect $\beta_{\text{ATE}} = E[Y(1) - Y(0)]$, where
$Y(a)$ is the potential outcome of unit $i$ under treatment level $a$,
can be identified as
\citep[see e.g.][Chapter
7]{imbens-rubin-2015}:
\begin{equation}\label{eq:ATE}
	\beta_{\text{ATE}}=E(Y \mid A=1)-E(Y \mid A=0).
\end{equation}

In this article, we consider the following class of regression
adjusted estimators of $\beta_{\text{ATE}}$. Let $\Gamma =
\Gamma^{(1)} \times \dotsb \times \Gamma^{(p)} \subseteq \mathbb{R}^p$
and $\Delta = \Delta^{(1)} \times \dotsb
\times \Delta^{(p)} \subseteq \mathbb{R}^p$ be two user-specified
sets, where the individual components $\Gamma^{(j)}$ and
$\Delta^{(j)}$ are either the real line $\mathbb{R}$ or a
singleton. Define the constrained ordinary least squares estimator as
\begin{align}
	\hat{\theta} = (\hat{\alpha},
	\hat{\beta},\hat{\gamma},\hat{\delta})=&\argmin_{\gamma \in \Gamma,
		\delta \in \Delta}
	\frac{1}{n}\sum_{i=1}^{n}\{Y_i-\alpha-\beta
	A_i-\gamma^T
	X_i-A_i(\delta^TX_i)\}^2.\label{eq:beta_sample}
\end{align}
We sometimes use the notation $\hat{\theta}(\Gamma,\Delta)$ (and similarly for the components of $\hat{\theta}$) to emphasize the dependence of the estimator on the sets $\Gamma$ and $\Delta$. Lemma \ref{lemma:alpha_beta} in Section \ref{sec:cen_cov} shows that $\hat{\beta}$ is
a reasonable estimator of $\beta_{\text{ATE}}$ when the covariates are centered,
i.e.\ $E(X)=0$; otherwise $\beta_{\text{ATE}}$ can be estimated by $\tilde{\beta}
=\hat{\beta}+\hat{\delta}^T\bar{X}$, where $\bar{X} = \sum_{i=1}^n X_i
/ n$. Before examining the asymptotic properties of $\hat{\beta}$ and
$\tilde{\beta}$, we 
give several 
examples in the class
of estimators \eqref{eq:beta_sample}.

\begin{example}\label{example:anova}
	The ANOVA, ANCOVA, ANHECOVA estimators correspond to setting
	$\Gamma=\Delta=\{0\}$; $\Gamma = \mathbb{R}^p$ and $\Delta=\{0\}$;
	$\Gamma = \mathbb{R}^p$ and $\Delta = \mathbb{R}^p$.
\end{example}

\begin{example}\label{example:DiD}
	In some applications, the covariate vector $X$ include the baseline
	value of the response before the treatment is assigned (let us call
	it $Y_0$). For simplicity, suppose the first entry of $X$ is $Y_0$,
	so $X = (X_1 = Y_{0} , X_{2}, \dots,X_{p})^T$. The
	difference-in-differences estimator corresponds to setting
	$\Gamma = \{1\} \times \mathbb{R}^{p-1}$ and  $\Delta = \{0\}
	\times \mathbb{R}^{p-1}$, while the lagged-dependent-variable
	regression estimator corresponds to setting
	$\Gamma \subseteq \mathbb{R}^{p}$ and $\Delta = \{0\} \times
	\mathbb{R}^{p-1}$. In observational studies, these two estimators
	rely on different identification assumptions \citep{ding-li-2019}
	and may converge to different limits. In the randomized experiment
	described above, both estimators should converge to the average
	treatment effect, but we are unaware of any comparison of their
	statistical efficiency in presence of covariates besides~$Y_0$.
\end{example}
\section{A unified analysis of linear regression-adjusted estimators}\label{sec:regression_adjust}

\subsection{Covariates with known expectation}\label{sec:cen_cov}

We first consider estimation of $\beta_{\text{ATE}}$ when the covariates $X$ have known expectation. As will be seen in a moment, the proof of uniform dominance is fairly straightforward in this case.

Consider the population counterpart to \eqref{eq:beta_sample}:
\begin{align}
	\theta =
	(\alpha,\beta,\gamma,\delta)=&\argmin_{\gamma \in \Gamma, \delta \in \Delta}
	E\{Y-\alpha-\beta A-\gamma^T
	X-A(\delta^TX)\}^2.\label{eq:beta_pop}
\end{align}
Clearly, $\theta = \theta(\Gamma, \Delta)$, and 
we often suppress the dependence of $\theta$ on $(\Gamma, \Delta)$ if it is
clear from the context.

\begin{lemma}\label{lemma:alpha_beta}
	For any $\Gamma$ and $\Delta$ of the form described in Section
	\ref{sec:modelling_frame}, we have $\beta = \beta_{\text{ATE}} - \delta^T
	E(X)$.
\end{lemma}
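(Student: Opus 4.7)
The plan is to exploit the fact that $\alpha$ and $\beta$ are unconstrained in the minimization problem \eqref{eq:beta_pop}, regardless of what $\Gamma$ and $\Delta$ are. So I would start by writing down the two first-order conditions for $\alpha$ and $\beta$, which must hold at the optimum $\theta=(\alpha,\beta,\gamma,\delta)$:
\begin{align*}
E\bigl[Y-\alpha-\beta A-\gamma^T X-A(\delta^T X)\bigr] &= 0, \\
E\bigl[A\{Y-\alpha-\beta A-\gamma^T X-A(\delta^T X)\}\bigr] &= 0.
\end{align*}
The constraints on $\gamma$ and $\delta$ only show up through additional (possibly inactive) stationarity conditions, which I will not need.

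Next I would simplify these two equations using the two key structural facts: $A^2=A$ because $A\in\{0,1\}$, and $A \perp X$ with $E(A)=\pi$ because treatment is assigned by a simple Bernoulli trial independent of $X$. The first gives $E(A X)=\pi E(X)$ and $E(A^2)=\pi$, so the second FOC becomes
\[
E(AY) = \pi\alpha + \pi\beta + \pi(\gamma+\delta)^T E(X),
\]
which, dividing by $\pi$, is exactly $E(Y\mid A=1) = \alpha + \beta + (\gamma+\delta)^T E(X)$. Doing the analogous simplification of the first FOC and using $E(Y) = \pi E(Y\mid A=1) + (1-\pi)E(Y\mid A=0)$, one obtains $\alpha + \gamma^T E(X) = E(Y\mid A=0)$.

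Finally I would subtract these two identities to cancel $\alpha$ and $\gamma^T E(X)$, yielding
\[
\beta + \delta^T E(X) \;=\; E(Y\mid A=1) - E(Y\mid A=0) \;=\; \beta_{\text{ATE}},
\]
by the identification formula \eqref{eq:ATE}, which rearranges to the desired $\beta = \beta_{\text{ATE}} - \delta^T E(X)$. There is no real obstacle here; the only point that needs care is to notice that because $\alpha$ and $\beta$ are always free parameters, the two FOCs above are exact equalities for every admissible pair $(\Gamma,\Delta)$, so the argument is genuinely uniform over the allowed constraint sets and does not require any linearity of the conditional expectation $E(Y\mid A,X)$.
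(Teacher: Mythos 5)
Your proposal is correct and follows essentially the same route as the paper's proof: both use the first-order conditions in $\alpha$ and $\beta$ (which are always unconstrained), together with $A\perp X$, $A^2=A$, and $E(A)=\pi$, to eliminate $\alpha$ and $\gamma$ and recover $\beta+\delta^T E(X)=\beta_{\text{ATE}}$. The only difference is cosmetic: you solve the two conditions into the identities $E(Y\mid A=1)=\alpha+\beta+(\gamma+\delta)^T E(X)$ and $E(Y\mid A=0)=\alpha+\gamma^T E(X)$ and subtract, whereas the paper takes $\pi$ times the first condition minus the second directly.
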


Without loss of generality, we shall assume that $E(X) = 0$ for the rest of Theorem \ref{sec:cen_cov}; otherwise, we can simply replace $X$ with $X -
E(X)$ since $E(X)$ is known. When $E(X) = 0$, Lemma
\ref{lemma:alpha_beta} shows that $\hat{\beta}$ is a
reasonable estimator of $\beta_{\text{ATE}}$. To study
the asymptotic properties of $\hat{\beta}$, we first state a
classical result for linear regression with heteroskedastic error. For
a proof of this result, see e.g.\ \citet{white1980heteroskedasticity}.

\begin{lemma} \label{lemma:cons-an}
	Consider a linear regression of an independent and identically distributed sample of response $Y \in \mathbb{R}$ on regressors $Z \in
	\mathbb{R}^p$. Let $\hat{\theta}$ and $\theta$
	be sample and population least squares estimators and $\epsilon(\theta) = Y - \theta^T Z$. Suppose that $E(ZZ^T)$ and $E\{\epsilon(\theta)^2ZZ^T\}$ are positive definite and $Y$, $Z$ have bounded fourth moments. Then, as $n\rightarrow\infty$, $\hat{\theta}\xrightarrow[]{}\theta$
	in probability and
	\begin{equation}\label{eq:asymp_normality}
		\sqrt{n}\left(\hat{\theta}-\theta\right)\xrightarrow[]{}N
		\left(0, \{E(ZZ^T)\}^{-1}E(ZZ^T\epsilon(\theta)^2)\{E(ZZ^T)\}^{-1}
		\right) ~ \text{in distribution.}
	\end{equation}
\end{lemma}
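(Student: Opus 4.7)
The plan is to follow the standard White (1980) argument, which splits naturally into a consistency step and an asymptotic-normality step, both driven by textbook limit theorems for i.i.d.\ sequences. First I would write down the closed-form expressions: the population minimizer satisfies the first-order condition $E[Z(Y - Z^T\theta)] = 0$, so $\theta = \{E(ZZ^T)\}^{-1} E(ZY)$, which is well-defined because $E(ZZ^T)$ is positive definite; analogously, whenever the sample Gram matrix is invertible (an event of probability tending to one, by the consistency step below), $\hat{\theta} = (n^{-1}\sum_i Z_i Z_i^T)^{-1} (n^{-1}\sum_i Z_i Y_i)$.

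For consistency, I would apply the weak law of large numbers coordinate-wise to both $n^{-1}\sum_i Z_iZ_i^T \to E(ZZ^T)$ and $n^{-1}\sum_i Z_iY_i \to E(ZY)$ in probability; the bounded fourth-moment assumption is more than enough to ensure the first moments of the summands $Z_{ij}Z_{ik}$ and $Z_{ij}Y_i$ exist (via Cauchy--Schwarz). Since matrix inversion is continuous at the positive-definite matrix $E(ZZ^T)$, the continuous mapping theorem then gives $\hat{\theta} \to \theta$ in probability.

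For the asymptotic normality statement, the key algebraic identity is
\[
\sqrt{n}(\hat{\theta} - \theta) = \left(\frac{1}{n}\sum_{i=1}^{n} Z_i Z_i^T\right)^{-1} \cdot \frac{1}{\sqrt{n}}\sum_{i=1}^{n} Z_i \epsilon_i(\theta),
\]
obtained by subtracting $(n^{-1}\sum Z_iZ_i^T)^{-1}(n^{-1}\sum Z_iZ_i^T)\theta$ from the closed form of $\hat{\theta}$. The summands $Z_i\epsilon_i(\theta)$ are i.i.d.\ with mean zero (by the population first-order condition) and finite covariance $E[ZZ^T\epsilon(\theta)^2]$; finiteness follows by expanding $\epsilon(\theta)^2 = (Y-\theta^TZ)^2$ and applying Cauchy--Schwarz, so that all terms reduce to fourth moments of $Y$ and of the components of $Z$, which are bounded by assumption. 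The multivariate CLT then gives $n^{-1/2}\sum_i Z_i\epsilon_i(\theta) \to N(0, E[ZZ^T\epsilon(\theta)^2])$ in distribution, and Slutsky's theorem combines this with the probability limit of the inverse Gram matrix to produce the sandwich form in \eqref{eq:asymp_normality}.

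There is no real obstacle here — the argument is classical and each ingredient (LLN, continuous mapping, CLT, Slutsky) is applied in a routine way. The only point worth flagging is the role of the two hypotheses on $E(ZZ^T)$ and $E[\epsilon(\theta)^2 ZZ^T]$: positive definiteness of the former is what allows inversion in both the closed form and the continuous mapping step, while positive definiteness of the latter ensures that the limiting sandwich covariance is non-degenerate, so that the normal limit is not supported on a lower-dimensional subspace. Bounded fourth moments are used exclusively to license the CLT for $Z_i \epsilon_i(\theta)$.
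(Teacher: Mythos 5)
Your proof is correct, but it takes a different route from the paper. The paper does not give a self-contained argument: it verifies that the hypotheses of the lemma (bounded fourth moments plus positive definiteness of $E(ZZ^T)$ and $E\{\epsilon(\theta)^2 ZZ^T\}$) imply the regularity conditions of \citet{white1980heteroskedasticity} --- uniform $(1+\eta)$-moment bounds on $Z_jZ_k$ and on $\epsilon(\theta)^2 Z_j Z_k$ via H\"{o}lder, and uniform nonsingularity of the two matrices --- and then cites White's theorems for consistency and asymptotic normality. You instead prove the result directly for the i.i.d.\ case: closed forms $\theta=\{E(ZZ^T)\}^{-1}E(ZY)$ and $\hat{\theta}=(n^{-1}\sum_i Z_iZ_i^T)^{-1}(n^{-1}\sum_i Z_iY_i)$, the weak law of large numbers plus continuity of matrix inversion for consistency, and the decomposition $\sqrt{n}(\hat{\theta}-\theta)=(n^{-1}\sum_i Z_iZ_i^T)^{-1}\,n^{-1/2}\sum_i Z_i\epsilon_i(\theta)$ combined with the multivariate CLT and Slutsky for the sandwich limit. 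Your version is more self-contained and arguably cleaner here, since White's conditions are tailored to independent but non-identically distributed data and are heavier than needed under the i.i.d.\ sampling assumed in this paper; the paper's version is shorter and makes the connection to the heteroscedasticity-consistent covariance literature explicit. Your accounting of where each hypothesis is used is accurate (positive definiteness of $E(ZZ^T)$ for invertibility and the continuous-mapping step, fourth moments for the CLT, positive definiteness of $E\{\epsilon(\theta)^2ZZ^T\}$ only for non-degeneracy of the limit), and you correctly note that invertibility of the sample Gram matrix holds on an event of probability tending to one.
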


Note that these results do not require that the linear model is correctly specified. By applying Lemma \ref{lemma:cons-an} to our problem with an appropriate $Z$ and regression error
\begin{equation}\label{eq:error}
	\epsilon = \epsilon(\theta) = Y-\alpha-\beta
	A-\gamma^T X-A(\delta^TX),
\end{equation}
we obtain the 
expression for the asymptotic variance of
$\hat{\beta}$. The proof of this result is straightforward
due to the block diagonal structure of $E(Z Z^T)$. This is made
possible by the 
assumption that $E(X) = 0$.

\begin{lemma} \label{lem:beta-var}
	Suppose that $E(X)=0$ and the regularity conditions in Lemma
	\ref{lemma:cons-an} are satisfied. Then, as $n\rightarrow \infty$,
	we have
	\begin{equation*}
		\sqrt{n}(\hat{\beta}-\beta)\xrightarrow[]{}N \left( 0,
		\frac{E\{(A-\pi)^2\epsilon^2 \}}{\pi^2(1-\pi)^2} \right)~
		\text{in distribution.}
	\end{equation*}
\end{lemma}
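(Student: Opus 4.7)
The plan is to apply Lemma~\ref{lemma:cons-an} to the regression that defines $\hat\theta$, after absorbing the fixed-coordinate constraints into the response. Let $\mathcal{G} = \{j : \Gamma^{(j)} = \mathbb{R}\}$ and $\mathcal{D} = \{j : \Delta^{(j)} = \mathbb{R}\}$ denote the free indices, and for $j \notin \mathcal{G}$ and $j \notin \mathcal{D}$ let $c_j$ and $d_j$ be the fixed values of $\gamma^{(j)}$ and $\delta^{(j)}$. Replacing $Y$ by $\tilde Y = Y - \sum_{j \notin \mathcal{G}} c_j X^{(j)} - A \sum_{j \notin \mathcal{D}} d_j X^{(j)}$ turns \eqref{eq:beta_sample} into an unconstrained ordinary least squares regression of $\tilde Y$ on $Z = (1, A, X_{\mathcal{G}}^T, (AX)_{\mathcal{D}}^T)^T$, whose error is precisely the $\epsilon$ defined in \eqref{eq:error}. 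Lemma~\ref{lemma:cons-an} then delivers the sandwich form of the asymptotic variance of $\hat\theta$, and it remains to extract the $(\beta,\beta)$ entry.

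Next I would verify the block-diagonal structure of $E(ZZ^T)$ that the excerpt alludes to. Because Bernoulli randomization gives $A \perp X$ and $E(X) = 0$, every cross moment between $(1,A)^T$ and $(X_{\mathcal{G}}^T, (AX)_{\mathcal{D}}^T)^T$ vanishes: $E(X_j) = 0$, $E(A X_j) = \pi\,E(X_j) = 0$, and $E(A \cdot A X_j) = E(A^{2})E(X_j) = 0$. Hence
$$E(ZZ^{T}) = \begin{pmatrix} M & 0 \\ 0 & \Sigma \end{pmatrix}, \qquad M = \begin{pmatrix} 1 & \pi \\ \pi & \pi \end{pmatrix},$$
where $\Sigma$ is the moment matrix of the remaining regressors. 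Since $\{E(ZZ^{T})\}^{-1}$ is block diagonal too, the $(\beta,\beta)$ entry of the sandwich in \eqref{eq:asymp_normality} collapses to
$$v^{T} E\bigl\{(1,A)^{T}(1,A)\,\epsilon^{2}\bigr\} v, \qquad v = M^{-1}(0,1)^{T},$$
with $\Sigma$ playing no further role.

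A direct $2 \times 2$ inversion gives $v = (-\pi, 1)^{T}/\{\pi(1-\pi)\}$, so $v^{T}(1,A)^{T} = (A-\pi)/\{\pi(1-\pi)\}$, and the expression above simplifies to $E\{(A-\pi)^{2}\epsilon^{2}\}/\{\pi^{2}(1-\pi)^{2}\}$. Combined with the consistency and asymptotic normality of $\hat\theta$ from Lemma~\ref{lemma:cons-an} (whose moment and positive-definiteness conditions are carried over by assumption), this is exactly the claimed limit. I do not foresee a real obstacle: the decoupling between the $(1,A)$ block and the covariate block is the entire content of the argument, and it is a one-line consequence of $E(X) = 0$ combined with $A \perp X$. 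The handling of fixed-coefficient constraints via $\tilde Y$ is purely mechanical and does not alter the form of $\epsilon$, so the calculation proceeds uniformly over all admissible choices of $\Gamma$ and $\Delta$.
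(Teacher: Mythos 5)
Your argument is correct and follows essentially the same route as the paper: exploit $E(X)=0$ and $A\perp\!\!\!\perp X$ to make $E(ZZ^T)$ block diagonal between the $(1,A)$ block and the covariate block, then read off the $(\beta,\beta)$ entry of the sandwich, which collapses to $E\{(A-\pi)^2\epsilon^2\}/\{\pi^2(1-\pi)^2\}$. Your only departure is cosmetic but welcome: you make the restricted case explicit by absorbing the fixed coefficients into an offset $\tilde Y$, where the paper simply says to delete the corresponding entries of $Z$ and repeat the calculation.
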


To state our first main result about uniform
dominance, we introduce an additional notation. Let
$\mathcal{U}(\Gamma) \subseteq
\{1,\dotsc,p\}$ denote the unrestricted dimensions of
$\Gamma$, i.e. $\mathcal{U}(\Gamma) = \{1 \leqslant j \leqslant p: \Gamma^{(j)} =
\mathbb{R}\}$. Similarly, let $\mathcal{U}(\Delta)$ denote the unrestricted
dimensions of $\Delta$.

\begin{theorem}\label{theorem:var-order}
	Suppose $E(X)=0$. Consider two estimators $\hat{\beta}_1$ and
	$\hat{\beta}_2$ obtained
	from the least squares problem \eqref{eq:beta_sample} with
	$(\Gamma,\Delta) = (\Gamma_1,\Delta_1)$ and
	$(\Gamma_2,\Delta_2)$, respectively, and suppose
	$(\Gamma_1,\Delta_1) \neq (\Gamma_2,\Delta_2)$. Then $\hat{\beta}_1$
	uniformly dominates $\hat{\beta}_2$ if 
	\begin{equation}
		\label{eq:nested}
		\Gamma_1 \supseteq \Gamma_2,~\Delta_1 \supseteq
		\Delta_2,~\text{and either}~\pi = 1/2~\text{or}~
		\mathcal{U}(\Delta_1) \supseteq
		\mathcal{U}(\Gamma_1).
	\end{equation}
\end{theorem}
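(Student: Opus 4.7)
The plan is to apply Lemma~\ref{lem:beta-var} to each estimator, so that the asymptotic variance of $\hat{\beta}_k$ (for $k=1,2$) is proportional to $E\{(A-\pi)^2\epsilon_k^2\}$, where $\epsilon_k$ denotes the residual \eqref{eq:error} evaluated at the population parameter $\theta_k = \theta(\Gamma_k,\Delta_k)$. Since $E(X)=0$, Lemma~\ref{lemma:alpha_beta} guarantees $\beta_1=\beta_2=\beta_{\text{ATE}}$, so both estimators target the same limit and uniform dominance reduces to the population inequality $E\{(A-\pi)^2\epsilon_1^2\}\le E\{(A-\pi)^2\epsilon_2^2\}$.

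Writing $\epsilon_2 = \epsilon_1 + \Delta$ with
\begin{equation*}
\Delta := (\alpha_1-\alpha_2) + (\beta_1-\beta_2)A + (\gamma_1-\gamma_2)^T X + A(\delta_1-\delta_2)^T X,
\end{equation*}
one obtains
\begin{equation*}
E\{(A-\pi)^2\epsilon_2^2\} - E\{(A-\pi)^2\epsilon_1^2\} = 2\,E\{(A-\pi)^2 \epsilon_1 \Delta\} + E\{(A-\pi)^2 \Delta^2\},
\end{equation*}
whose second summand is nonnegative. I would then exploit the binary identity $(A-\pi)^2 = (1-2\pi)A + \pi^2$ to split the cross term as
\begin{equation*}
E\{(A-\pi)^2 \epsilon_1 \Delta\} = \pi^2\, E(\epsilon_1 \Delta) + (1-2\pi)\, E(A\epsilon_1\Delta),
\end{equation*}
and invoke the first-order conditions for \eqref{eq:beta_pop} at $\theta_1$: $E(\epsilon_1)=E(A\epsilon_1)=0$, together with $E(X_j\epsilon_1)=0$ for $j\in\mathcal{U}(\Gamma_1)$ and $E(AX_j\epsilon_1)=0$ for $j\in\mathcal{U}(\Delta_1)$. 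The nestedness $\Gamma_1\supseteq\Gamma_2$ and $\Delta_1\supseteq\Delta_2$ forces the $j$-th coordinate of $\gamma_1-\gamma_2$ (respectively $\delta_1-\delta_2$) to vanish whenever $j\notin\mathcal{U}(\Gamma_1)$ (respectively $j\notin\mathcal{U}(\Delta_1)$), so these orthogonality conditions kill $E(\epsilon_1\Delta)$ term by term. Using $A^2=A$ to rewrite $A\Delta$ purely in terms of $A$ and the interaction variables $AX_j$, the same reasoning reduces $E(A\epsilon_1\Delta)$ to $\sum_{j\in\mathcal{U}(\Gamma_1)\setminus\mathcal{U}(\Delta_1)}(\gamma_1^{(j)}-\gamma_2^{(j)})\,E(AX_j\epsilon_1)$. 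Either of the two hypotheses in \eqref{eq:nested} then disposes of this remainder: if $\pi=1/2$ its prefactor $1-2\pi$ is zero, and if $\mathcal{U}(\Delta_1)\supseteq\mathcal{U}(\Gamma_1)$ the indexing set is empty.

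The main subtlety is the bookkeeping in the cross-term argument: tracking exactly which coordinates of $\gamma_1-\gamma_2$ and $\delta_1-\delta_2$ are forced to zero by the inclusions $\Gamma_1\supseteq\Gamma_2$ and $\Delta_1\supseteq\Delta_2$, and verifying that after $A\Delta$ is expanded via $A^2=A$, each surviving term is matched by one of the orthogonality conditions coming from \eqref{eq:beta_pop}. Beyond that, the argument amounts to Lemma~\ref{lem:beta-var} followed by the elementary identity $(A-\pi)^2=(1-2\pi)A+\pi^2$.
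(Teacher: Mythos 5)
Your proposal is correct and follows essentially the same route as the paper: Lemma~\ref{lem:beta-var} reduces the comparison to $E\{(A-\pi)^2\epsilon_k^2\}$, the squared term in the expansion of $\epsilon_2^2-\epsilon_1^2$ is discarded as nonnegative, and the cross term is annihilated by the first-order conditions of \eqref{eq:beta_pop} together with the binary identity $(A-\pi)^2=(1-2\pi)A+\pi^2$, leaving precisely the remainder supported on $\mathcal{U}(\Gamma_1)\setminus\mathcal{U}(\Delta_1)$ that either hypothesis in \eqref{eq:nested} eliminates. The only cosmetic difference is that you carry the intercept and treatment-coefficient differences inside your perturbation term and let $E(\epsilon_1)=E(A\epsilon_1)=0$ absorb them, whereas the paper invokes Lemma~\ref{lemma:alpha_beta} to note that these differences vanish outright when $E(X)=0$.
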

\begin{proof}
	The first-order condition for
	the least squares problem
	\eqref{eq:beta_pop} can be written as
	\begin{equation}
		\label{eq:eps-uncorrelated}
		E(\epsilon) = 0,~ E (\epsilon A) = 0,~ E\{\epsilon X_{\mathcal{U}(\Gamma)}\} = 0,~    E\{\epsilon A X_{\mathcal{U}(\Delta)}\}
		= 0.
	\end{equation}
	Let $\theta_k = (\alpha_k,\beta_k,\gamma_k,\delta_k)$ be the
	solution for when $(\Gamma,\Delta) = (\Gamma_k,\Delta_k)$ and
	$\epsilon_k = Y - \alpha_k - \beta_k A - \gamma_k^T X - A
	(\delta_k^T X)$ be the corresponding regression error, $k=1,2$. By
	Lemma \ref{lemma:alpha_beta}, $	\epsilon_2 = \epsilon_1 + (\gamma_1 - \gamma_2)^T X + A (\delta_1 -
	\delta_2)^TX$.  
	Let $V_k = [E\{(A-\pi)^2\epsilon_k^2 \}]/\{\pi^2(1-\pi)^2\}$. Then by Lemma \ref{lem:beta-var},
	\begin{align}
		\pi^2(1-\pi)^2(V_2 - V_1)
		=& E\{(A-\pi)^2(\epsilon_2^2-\epsilon_1^2)\}\nonumber\\
		=&E\left[(A-\pi)^2
		2\epsilon_1  \{(\gamma_1 - \gamma_2)^T X + A (\delta_1 -
		\delta_2)^TX \}\right] \nonumber\\
		&+ E\left[(A-\pi)^2
		\{(\gamma_1 - \gamma_2)^T X + A (\delta_1 -
		\delta_2)^TX\}^2\right] \nonumber \\
		\geq& 2 E\left[(A-\pi)^2
		\epsilon_1  \{(\gamma_1 - \gamma_2)^T X + A (\delta_1 -
		\delta_2)^TX \}\right]. \label{eq:comparison_1}
	\end{align}
	Since $\mathcal{U}(\Gamma_k)$ contains the unrestricted dimensions of $\gamma_k$, $k=1,2$, and $\Gamma_1 \supseteq \Gamma_2$ by assumption, the non-zero elements of $\gamma_1 - \gamma_2$ can only appear in $\mathcal{U}(\Gamma_1)$ (otherwise the coefficients are fixed by design). Similarly, the non-zero elements of $\delta_1 - \delta_2$ 
	appear in $\mathcal{U}(\Delta_1)$. By using 
	\eqref{eq:eps-uncorrelated} and
	$A \perp \!\!\! \perp X$, $A^2 = A$, we have
	\[
	\begin{split}
		&E\left[(A-\pi)^2
		\epsilon_1  \{(\gamma_1 - \gamma_2)^T X + A (\delta_1 -
		\delta_2)^TX \}\right] \\
		=& (\gamma_1 - \gamma_2)_{\mathcal{U}(\Gamma_1)}^T \left\{
		E\left[ (1 - 2 \pi) \epsilon_1 A
		X_{\mathcal{U}(\Gamma_1)} \right] + E\left[ \pi^2 \epsilon_1
		X_{\mathcal{U}(\Gamma_1)} \right] \right\} \\
		&+ (\delta_1 - \delta_2)_{\mathcal{U}(\Delta_1)}^T E\left[ (1 -
		\pi)^2 \epsilon_1 A X_{\mathcal{U}(\Delta_1)} \right] 
		= (\gamma_1 - \gamma_2)_{\mathcal{U}(\Gamma_1)}^T
		E\left[ (1 - 2 \pi) \epsilon_1 A
		X_{\mathcal{U}(\Gamma_1)} \right],
	\end{split}
	\]
	where the last equality follow from applying
	\eqref{eq:eps-uncorrelated} to $\epsilon = \epsilon_1$ and $(\Gamma,
	\Delta) = (\Gamma_1,\Delta_1)$. Finally, $E\left[ (1 - 2 \pi) \epsilon_1 A
	X_{\mathcal{U}(\Gamma_1)} \right] = 0$ if $\pi = 1/2$ or
	$\mathcal{U}(\Delta_1) \supseteq \mathcal{U}(\Gamma_1)$ by
	\eqref{eq:eps-uncorrelated}.
\end{proof}
In words, Theorem \ref{theorem:var-order} says that, when the expectation of the covariates is known, one linear regression-adjusted estimator is uniformly dominated by another if the two linear models are nested, the first estimator is obtained from the larger model, and 
the larger model includes an interaction term whenever the corresponding main effect is present; there is no such requirement for the smaller model. The conditions in~\eqref{eq:nested} can 
be easily applied to obtain variance orderings among the examples in
Section \ref{sec:modelling_frame}. We will discuss them in more detail
after deriving a similar sufficient 
condition 
when the expectation of $X$ is unknown.

\subsection{Covariates with unknown expectation}\label{sec:cov_non_centre}

In most practical situations, we do not know the expectation of the
covariates and it is common to centre the covariates empirically
before performing the linear regression. Let $\tilde{\theta}$ be the
least squares estimator in~\eqref{eq:beta_sample} with $X_i$
replaced by $X_i-\bar{X}$ where $\bar{X}=\sum_{i=1}^{n} X_i/n$, that is,
\begin{align}
	\tilde{\theta} = (\tilde{\alpha},
	\tilde{\beta},\tilde{\gamma},\tilde{\delta})=&\argmin_{\gamma \in
		\Gamma,\delta \in \Delta}\frac{1}{n}\sum_{i=1}^{n}\left[Y_i-\alpha-\beta A_i-\gamma^T (X_i-\bar{X})-A_i\left\{\delta^T(X_i-\bar{X})\right\}\right]^2.\label{eq:beta_sample_nc}
\end{align}
We have
$\tilde{\beta}=\hat{\beta}$ if no
interaction term is included, i.e. if $\Delta = \{0\}^p$, because both \eqref{eq:beta_sample} and \eqref{eq:beta_sample_nc}
include an intercept term. More generally, by differentiating
\eqref{eq:beta_sample_nc} with respect to $\alpha$ and $\beta$ and
following the proof of Lemma \ref{lemma:alpha_beta}, it is
straightforward to verify that $\tilde{\beta}
=\hat{\beta}+\hat{\delta}^T\bar{X}$. Thus, $\tilde{\beta}$ is a
reasonable estimator of $\beta_{\text{ATE}} = \beta + \delta^T E(X)$. Estimator $\tilde{\theta}$ is invariant to any shift transformation of
the covariates. In other words, $\tilde{\theta}$ remains the same if we
replace $X_i$ by $X_i + c$, $i=1,\dotsc,n$, for any $c \in
\mathbb{R}^p$. Therefore, the statistical properties of
$\tilde{\theta}$ do not depend on $E(X)$ and, for simplifying the analysis, we shall assume $E(X) =
0$ without loss of generality. The asymptotic variance of $\tilde{\beta}$ generally differs
from that of $\hat{\beta}$ due to the variability in $\bar{X}$. The
next result quantifies this difference and it is proved in Appendix~\ref{sec:proof_lemma4}.

\begin{lemma}\label{lemma:var-beta-jk-tilde}
	Consider two estimators $\hat{\beta} = \hat{\beta}(\Gamma,\Delta)$
	and $\tilde{\beta} = \tilde{\beta}(\Gamma,\Delta)$ for some
	user-specified $(\Gamma,\Delta)$. Under regularity conditions and as
	$n\rightarrow\infty$, we have
	\[
	n\left\{\text{var}(\tilde{\beta})-\text{var}(\hat{\beta})\right\}\xrightarrow[]{}
	\delta_s^T\Sigma(2 \delta_f - \delta_s),
	\]
	where $\delta_f = \delta_0(\mathbb{R}^p,\mathbb{R}^p)$ and $\delta_s =
	\delta_0(\Gamma,\Delta)$ are obtained by solving the population
	least squares problem \eqref{eq:beta_pop} for the full model and a
	sub-model, respectively.
\end{lemma}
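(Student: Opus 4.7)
The plan is to obtain an asymptotically linear expansion for $\tilde{\beta}$ and then read off the asymptotic variance directly from the resulting influence function. As already noted in the text, $\tilde{\beta}$ is shift-invariant, so we can assume $E(X)=0$, in which case Lemma \ref{lemma:alpha_beta} gives $\beta=\beta_{\text{ATE}}$ and $\tilde{\beta}=\hat{\beta}+\hat{\delta}^T\bar{X}$. Applying the block structure of $E(ZZ^T)$ exploited in the proof of Lemma \ref{lem:beta-var}, the standard M-estimator expansion yields
\[
\sqrt{n}(\hat{\beta}-\beta)=\frac{1}{\sqrt{n}}\sum_{i=1}^n\frac{(A_i-\pi)\epsilon_{s,i}}{\pi(1-\pi)}+o_p(1),
\]
where $\epsilon_s$ is the regression error under $(\Gamma,\Delta)$. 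Since $\hat{\delta}\to\delta_s$ in probability and $\sqrt{n}\bar{X}=O_p(1)$, we can linearize $\sqrt{n}\,\hat{\delta}^T\bar{X}=\delta_s^T\sqrt{n}\bar{X}+o_p(1)$, giving the joint expansion
\[
\sqrt{n}(\tilde{\beta}-\beta_{\text{ATE}})=\frac{1}{\sqrt{n}}\sum_{i=1}^n\left\{\frac{(A_i-\pi)\epsilon_{s,i}}{\pi(1-\pi)}+\delta_s^T X_i\right\}+o_p(1).
\]

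Squaring and taking expectations of the i.i.d.\ influence function gives
\[
n\{\text{var}(\tilde{\beta})-\text{var}(\hat{\beta})\}\to 2\,\frac{E[(A-\pi)\epsilon_s\,\delta_s^T X]}{\pi(1-\pi)}+\delta_s^T\Sigma\delta_s.
\]
The main computation is to evaluate the cross term $E[(A-\pi)\epsilon_s X]$. The key idea is to rewrite $\epsilon_s$ in terms of the full-model error: $\epsilon_s=\epsilon_f+(\alpha_f-\alpha_s)+(\beta_f-\beta_s)A+(\gamma_f-\gamma_s)^T X+A(\delta_f-\delta_s)^T X$. The full-model first-order conditions $E(\epsilon_f)=E(\epsilon_f A)=E(\epsilon_f X)=E(\epsilon_f AX)=0$, combined with $A\perp\!\!\!\perp X$ and $E(X)=0$, annihilate every resulting contribution except the one involving $\delta_f-\delta_s$. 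The surviving piece is $E[A(A-\pi)XX^T](\delta_f-\delta_s)=\pi(1-\pi)\Sigma(\delta_f-\delta_s)$, so
\[
\frac{E[(A-\pi)\epsilon_s X]}{\pi(1-\pi)}=\Sigma(\delta_f-\delta_s),
\]
and plugging back produces $2\delta_s^T\Sigma(\delta_f-\delta_s)+\delta_s^T\Sigma\delta_s=\delta_s^T\Sigma(2\delta_f-\delta_s)$.

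The delicate step is the cross-term identity: the sub-model normal equations only give $E[\epsilon_s A X_{\mathcal{U}(\Delta)}]=0$ and $E[\epsilon_s X_{\mathcal{U}(\Gamma)}]=0$ on the restricted coordinates, which is insufficient to eliminate the cross term on all of $X$. The decisive trick is to substitute $\epsilon_s=\epsilon_f+(\text{model difference})$ and invoke the richer full-model normal equations, which hold on all of $X$ and $AX$; this is where both $\delta_f$ and $\delta_s$ enter the final answer. Everything else is bookkeeping: verifying that higher-order remainders in the delta-method linearization of $\hat{\delta}^T\bar{X}$ are genuinely $o_p(n^{-1/2})$ (standard given the moment assumptions of Lemma \ref{lemma:cons-an}), and that the i.i.d.\ variance of the influence function equals the limiting variance.
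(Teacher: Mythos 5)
Your proposal is correct and follows essentially the same route as the paper: both reduce the problem to the expansion $\sqrt{n}(\tilde{\beta}-\beta_{\text{ATE}})\approx\sqrt{n}(\hat{\beta}-\beta)+\delta_s^T\sqrt{n}\bar{X}$ and then evaluate the cross term $E\{(A-\pi)\epsilon_s X\}/\{\pi(1-\pi)\}=\Sigma(\delta_f-\delta_s)$ by writing $\epsilon_s$ as $\epsilon_f$ plus the model difference and invoking the full-model normal equations. The only cosmetic difference is that you phrase the covariance computation via the influence function of $\hat{\beta}$ directly, whereas the paper obtains the same quantity from the joint sandwich formula for the estimating equations of $(\mu,\theta)$.
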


Let $\hat{\theta}_k$ and $\tilde{\theta}_k$ be the solution to
\eqref{eq:beta_sample} and \eqref{eq:beta_sample_nc}, respectively,
for the choice $(\Gamma,\Delta) = (\Gamma_k,\Delta_k)$, $k = 1,2$. Let
$\tilde{V}_k$ be the asymptotic variance of $\tilde{\beta}_k$, that
is, $\tilde{V}_k = \lim_{n \to \infty} n \text{var}(\tilde{\beta}_k)$. Our
main goal is to derive conditions on $(\Gamma_1,\Delta_1)$ and
$(\Gamma_2,\Delta_2)$ such that $\tilde{V}_1$ and $\tilde{V}_2$ admit
a deterministic ordering. It seems natural to require that the models
are nested: $\Gamma_1 \supseteq \Gamma_2$ and $\Delta_1 \supseteq
\Delta_2$. 

\begin{table}[htb]
	\centering
     \def~{\hphantom{0}}{
		\begin{tabular}[h]{ll c c }\hline
			Model 1 & Model 2 & $V_1 \leq V_2$ & $\tilde{V}_1 \leq\tilde{V}_2$ \\\hline
			$\sim 1 + A + X + A:X$ & $\sim 1 + A$ & True: Theorem \ref{theorem:var-order} &                                                                     True: Theorem \ref{theorem:order-non-cent}\\
			$\sim 1 + A + X + A:X$ & $\sim 1 + A + X$ & True: Theorem \ref{theorem:var-order} &                                                                                   True: Theorem \ref{theorem:order-non-cent}  \\
			$\sim 1 + A + X + A:X$ & $\sim 1 + A + A:X$ & True: Theorem \ref{theorem:var-order} & True: Theorem \ref{theorem:order-non-cent}  \\
			$\sim 1 + A + X$ & $\sim 1 + A$ &
			\multicolumn{2}{c}{Not always true: eq. \eqref{eq:anova-ancova}} \\
			$\sim 1 + A + A:X$ & $\sim 1 + A$ & {True: eq. \eqref{eq:comparison_2}} &
			{Not always true: eq. \eqref{eq:comparion3}}\\\hline
	\end{tabular}}
	\caption{Variance ordering of estimators of $\beta_{\text{ATE}}$ in \eqref{eq:ATE}. $V_k$: variance of $\hat{\beta}_k$, $k=1,2$; $\tilde{V}_k$: variance of $\tilde{\beta}_k$, $k=1,2$. }
	\label{tab:table-comparisons}
\end{table}

Table \ref{tab:table-comparisons} provides a list of uniform dominance
relationships in some basic comparisons. We use the \texttt{R}
convention to denote linear models: explanatory variables in the
regression are joined by $+$, $1$ stands for the intercept term, and
$A:X$ stands for the treatment-covariate interaction. Using Table \ref{tab:table-comparisons}, we conjecture that in order for $\tilde{V}_1 \leqslant
\tilde{V}_2$, the third condition in \eqref{eq:nested} needs to be modified. 
which is verified in the next
theorem.

\begin{theorem}\label{theorem:order-non-cent}
	Consider two estimators $\tilde{\beta}_1$ and $\tilde{\beta}_2$
	obtained from solving \eqref{eq:beta_sample_nc} with
	$(\Gamma,\Delta) = (\Gamma_1,\Delta_1)$ and
	$(\Gamma_2,\Delta_2)$, respectively, and suppose
	$(\Gamma_1,\Delta_1) \neq (\Gamma_2,\Delta_2)$. Then $\tilde{\beta}_1$
	uniformly dominates $\tilde{\beta}_2$ 
	if 
	\begin{equation}  \label{eq:nested-2}
		~\Gamma_1 \supseteq \Gamma_2,~\Delta_1 \supseteq
		\Delta_2,~\text{and}~\mathcal{U}(\Gamma_1)=\mathcal{U}(\Delta_1);
	\end{equation}
\end{theorem}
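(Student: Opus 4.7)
The plan is to reformulate $\tilde{V}$ as a single weighted quadratic so the argument used in the proof of Theorem~\ref{theorem:var-order} can be replayed almost verbatim. The key observation is that if we define the \emph{modified residual}
\[
\tilde{\epsilon} = Y - \alpha - \beta A - \gamma^T X - (1-\pi)\delta^T X,
\]
obtained from $\epsilon$ in \eqref{eq:error} by replacing the random $A$ in the interaction term by its mean $\pi$, then $\tilde{V} = E\{(A-\pi)^2 \tilde{\epsilon}^2\}/[\pi^2(1-\pi)^2]$. I would establish this from the first-order expansion $\sqrt{n}(\tilde{\beta} - \beta_{\text{ATE}}) = \sqrt{n}(\hat{\beta} - \beta) + \delta^T \sqrt{n}\,\bar{X} + o_p(1)$, whose summands combine to give the influence function $(A-\pi)\epsilon/[\pi(1-\pi)] + \delta^T X$. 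The elementary identity $(A-\pi)(A-1+\pi) = \pi(1-\pi)$ for $A \in \{0,1\}$ collapses this expression into $(A-\pi)\tilde{\epsilon}/[\pi(1-\pi)]$; alternatively, one can verify the same representation by expanding $E\{(A-\pi)^2\tilde{\epsilon}^2\}$ directly and matching the correction formula in Lemma~\ref{lemma:var-beta-jk-tilde}.

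With this representation in hand, I would replay the proof of Theorem~\ref{theorem:var-order} with $\tilde{\epsilon}_k$ in place of $\epsilon_k$. Because the first two conditions in \eqref{eq:eps-uncorrelated} force $\alpha = E(Y \mid A=0)$ and $\beta = \beta_{\text{ATE}}$ when $E(X)=0$, neither $\alpha$ nor $\beta$ depends on the choice of $(\Gamma,\Delta)$, so
\[
\tilde{\epsilon}_2 - \tilde{\epsilon}_1 = (\gamma_1 - \gamma_2)^T X + (1-\pi)(\delta_1 - \delta_2)^T X.
\]
Under the nestedness hypotheses, the nonzero coordinates of $\gamma_1 - \gamma_2$ lie in $\mathcal{U}(\Gamma_1)$ and those of $\delta_1 - \delta_2$ lie in $\mathcal{U}(\Delta_1)$; the equality $\mathcal{U}(\Gamma_1) = \mathcal{U}(\Delta_1)$, which I denote by $\mathcal{U}$, then lets one write this difference as a deterministic linear combination $c^T X_{\mathcal{U}}$. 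Applying $\tilde{\epsilon}_2^2 - \tilde{\epsilon}_1^2 = (\tilde{\epsilon}_2 - \tilde{\epsilon}_1)^2 + 2\tilde{\epsilon}_1(\tilde{\epsilon}_2 - \tilde{\epsilon}_1)$ yields
\[
\pi^2(1-\pi)^2(\tilde{V}_2 - \tilde{V}_1) = E\{(A-\pi)^2 (\tilde{\epsilon}_2 - \tilde{\epsilon}_1)^2\} + 2 E\{(A-\pi)^2 \tilde{\epsilon}_1 (\tilde{\epsilon}_2 - \tilde{\epsilon}_1)\},
\]
whose first term is non-negative, so it suffices to kill the cross term.

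To do so, I would substitute $\tilde{\epsilon}_1 = \epsilon_1 + (A - 1 + \pi)\delta_1^T X$ into the cross term. A direct calculation gives $E\{(A-\pi)^2(A - 1 + \pi)\} = 0$, which combined with $A \perp X$ reduces the cross term to $2 c^T E\{(A-\pi)^2 \epsilon_1 X_{\mathcal{U}}\}$. Expanding $(A-\pi)^2 = A(1-2\pi) + \pi^2$ and invoking the first-order conditions $E(\epsilon_1 X_j) = 0$ and $E(\epsilon_1 A X_j) = 0$ from \eqref{eq:eps-uncorrelated}, both of which are available for every $j \in \mathcal{U} = \mathcal{U}(\Gamma_1) = \mathcal{U}(\Delta_1)$, forces this expectation to be zero, completing the proof.

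The main obstacle is spotting the modified residual $\tilde{\epsilon}$; once that identification is made, the algebra proceeds as an exact parallel of Theorem~\ref{theorem:var-order}, and the role of the equality $\mathcal{U}(\Gamma_1) = \mathcal{U}(\Delta_1)$ becomes transparent: it is precisely the hypothesis that allows both first-order conditions to be invoked simultaneously on every coordinate appearing in $\tilde{\epsilon}_2 - \tilde{\epsilon}_1$.
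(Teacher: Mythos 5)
Your proof is correct, but it takes a genuinely different route from the paper's. The paper writes $\tilde{V}_2 - \tilde{V}_1 = (V_2-V_1) + (\tilde{V}_2-V_2) - (\tilde{V}_1-V_1)$, evaluates the correction terms with Lemma~\ref{lemma:var-beta-jk-tilde} (applied with model 1 playing the role of the full model), and then completes the square to exhibit the difference as $\frac{1}{\pi(1-\pi)}\{d_\gamma+(1-\pi)d_\delta\}^T\Sigma\{d_\gamma+(1-\pi)d_\delta\}$. You instead absorb the $\delta^T\bar{X}$ correction into the residual itself: the identity $(A-\pi)(A-1+\pi)=\pi(1-\pi)$ collapses the influence function $(A-\pi)\epsilon/\{\pi(1-\pi)\}+\delta^TX$ into $(A-\pi)\tilde{\epsilon}/\{\pi(1-\pi)\}$ with $\tilde{\epsilon}=\epsilon+(A-1+\pi)\delta^TX$, so $\tilde{V}$ takes exactly the form of Lemma~\ref{lem:beta-var} and the Theorem~\ref{theorem:var-order} argument reruns verbatim. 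I checked the details: your representation is consistent with Lemma~\ref{lemma:var-beta-jk-tilde}, since $E\{(A-\pi)\epsilon_s X\}=\pi(1-\pi)\Sigma(\delta_f-\delta_s)$ recovers $\delta_s^T\Sigma(2\delta_f-\delta_s)$; the claim $\alpha_1=\alpha_2=E(Y\mid A=0)$, $\beta_1=\beta_2=\beta_{\text{ATE}}$ does follow from the first two conditions in \eqref{eq:eps-uncorrelated} when $E(X)=0$; $E\{(A-\pi)^2(A-1+\pi)\}=0$ is elementary; and your surviving term $E\{(A-\pi)^2(\tilde{\epsilon}_2-\tilde{\epsilon}_1)^2\}/\{\pi^2(1-\pi)^2\}$ equals the paper's final quadratic form exactly because $\tilde{\epsilon}_2-\tilde{\epsilon}_1=\{d_\gamma+(1-\pi)d_\delta\}^TX$. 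What your route buys is transparency: nonnegativity appears automatically as the square of a residual difference rather than via algebraic completion of the square, and the necessity of the two-sided condition $\mathcal{U}(\Gamma_1)=\mathcal{U}(\Delta_1)$ is immediate, since the support of $d_\gamma+(1-\pi)d_\delta$ is contained in $\mathcal{U}(\Gamma_1)\cup\mathcal{U}(\Delta_1)$ and killing the cross term requires both $E(\epsilon_1X_j)=0$ and $E(\epsilon_1AX_j)=0$ on every such coordinate. What it costs is that the expansion $\sqrt{n}(\tilde{\beta}-\beta_{\text{ATE}})=\sqrt{n}(\hat{\beta}-\beta)+\delta^T\sqrt{n}\bar{X}+o_p(1)$ must still be justified, which is essentially the first step of the proof of Lemma~\ref{lemma:var-beta-jk-tilde}, so both arguments ultimately rest on the same regularity work.
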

\begin{proof}
	Let $d_{\gamma} = \gamma_1 - \gamma_2$ and $d_{\delta} = \delta_1 -
	\delta_2$. By applying \eqref{eq:comparison_1} and Theorem \ref{lemma:var-beta-jk-tilde} 
	with the first model treated as the
	full model,
	\begin{align}
		\tilde{V}_2 - \tilde{V}_1 &= (V_2 - V_1) + (\tilde{V}_2 - V_2) -
		(\tilde{V}_1 - V_1) \nonumber\\
		&= \frac{1}{\pi^2 (1-\pi)^2} E\left[(A-\pi)^2
		\{d_{\gamma}^T X + A d_{\delta}^TX\}^2\right] + \delta_2^T \Sigma (2 \delta_1 -
		\delta_2) - \delta_1^T \Sigma \delta_1\nonumber\\
		&= \frac{E\left[A(A-\pi)^2
			\{d_{\gamma}^T X + d_{\delta}^TX\}^2\right] +
			E\left[(1-A)(A-\pi)^2
			\{d_{\gamma}^T X\}^2\right]}{\pi^2 (1-\pi)^2} -
		d_{\delta}^T \Sigma d_{\delta} \nonumber\\
		&= \frac{1}{\pi} (d_{\gamma} + d_{\delta})^T \Sigma
		(d_{\gamma} + d_{\delta}) + \frac{1}{1 - \pi}
		d_{\gamma}^T \Sigma d_{\gamma} - d_{\delta}^T
		\Sigma d_{\delta} \nonumber\\
		&= \frac{1}{\pi(1-\pi)} \left\{ d_{\gamma}^T \Sigma
		d_{\gamma} + (1-\pi)^2 d_{\delta}^T \Sigma
		d_{\delta} + 2 (1 - \pi) d_{\gamma}^T \Sigma
		d_{\delta} \right\} \nonumber\\
		&= \frac{1}{\pi(1-\pi)} \{d_{\gamma} + (1-\pi)
		d_{\delta}\}^T \Sigma \{d_{\gamma} + (1-\pi)
		d_{\delta}\}\geqslant 0. \label{eq:diff_theorem2}
	\end{align}
	which completes the proof.
\end{proof}
By verifying the conditions in
\eqref{eq:nested-2}, we 
have the following results
concerning the examples in Section \ref{sec:modelling_frame}.
\begin{corollary}
	The ANHECOVA estimator is asymptotically more efficient than the
	ANOVA and ANCOVA estimators. There is no guaranteed variance
	ordering between ANOVA and ANCOVA.
\end{corollary}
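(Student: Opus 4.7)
The plan is to read off the three estimators in the language of Theorem~\ref{theorem:order-non-cent} and apply that theorem twice, then to invoke the counterexample already recorded in Table~\ref{tab:table-comparisons}. From Example~\ref{example:anova}, the parameter sets are $(\Gamma,\Delta) = (\{0\}^p, \{0\}^p)$ for ANOVA, $(\mathbb{R}^p, \{0\}^p)$ for ANCOVA, and $(\mathbb{R}^p, \mathbb{R}^p)$ for ANHECOVA, giving unrestricted-dimension sets $(\emptyset,\emptyset)$, $(\{1,\dots,p\},\emptyset)$, and $(\{1,\dots,p\},\{1,\dots,p\})$ respectively.

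To show that ANHECOVA uniformly dominates each of the other two estimators, I would take Model~1 in Theorem~\ref{theorem:order-non-cent} to be ANHECOVA and Model~2 to be either ANOVA or ANCOVA. In both comparisons the inclusions $\Gamma_1 = \mathbb{R}^p \supseteq \Gamma_2$ and $\Delta_1 = \mathbb{R}^p \supseteq \Delta_2$ are immediate, and the third condition of \eqref{eq:nested-2}, namely $\mathcal{U}(\Gamma_1)=\mathcal{U}(\Delta_1)$, holds because both sets equal $\{1,\dots,p\}$ by construction of the ANHECOVA estimator. Theorem~\ref{theorem:order-non-cent} then yields $\tilde{V}^{\mathrm{ANHECOVA}} \leqslant \tilde{V}^{\mathrm{ANOVA}}$ and $\tilde{V}^{\mathrm{ANHECOVA}} \leqslant \tilde{V}^{\mathrm{ANCOVA}}$ with no further work.

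For the last sentence, the non-existence of a guaranteed variance ordering between ANOVA and ANCOVA, I would simply point to the row ``$\sim 1 + A + X$ versus $\sim 1 + A$'' of Table~\ref{tab:table-comparisons}, which records an explicit distribution (via equation~\eqref{eq:anova-ancova}) under which $\tilde{V}^{\mathrm{ANCOVA}} > \tilde{V}^{\mathrm{ANOVA}}$; one example in each direction suffices to rule out a uniform ordering, and the other direction is immediate by swapping the two roles in the same family of examples. The only genuinely non-trivial step in the whole argument is the construction of that counterexample, but this has been carried out upstream of the corollary; within the corollary itself, the work reduces to pattern-matching between Example~\ref{example:anova} and the hypotheses of Theorem~\ref{theorem:order-non-cent}, and the main pitfall is remembering that \eqref{eq:nested-2} requires \emph{equality} of the unrestricted-dimension sets rather than the weaker inclusion that appears in Theorem~\ref{theorem:var-order}.
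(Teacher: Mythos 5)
Your proposal is correct and follows essentially the same route as the paper: the corollary is obtained by pattern-matching Example~\ref{example:anova} against the hypotheses of Theorem~\ref{theorem:order-non-cent} (noting that $\mathcal{U}(\Gamma_1)=\mathcal{U}(\Delta_1)=\{1,\dots,p\}$ for ANHECOVA), and the failure of a uniform ANOVA/ANCOVA ordering is witnessed by the counterexample behind equation~\eqref{eq:anova-ancova}. The only loose phrase is ``swapping the two roles'': the reverse direction is not obtained by swapping but by choosing different parameter values in the same variance formula (e.g.\ $\delta_f=0$, $\gamma_f\neq 0$ makes ANCOVA strictly better), which is immediate and does not affect correctness.
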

\begin{corollary}
	The lagged-dependent-variable regression estimator is more efficient
	than the difference-in-differences estimator.
\end{corollary}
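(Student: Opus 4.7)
My plan is to derive this corollary as a direct application of Theorem~\ref{theorem:order-non-cent}: I would read off the pairs $(\Gamma_k,\Delta_k)$ for the two estimators from Example~\ref{example:DiD}, check the three hypotheses in~\eqref{eq:nested-2}, and conclude. From Example~\ref{example:DiD}, the DiD estimator corresponds to $\Gamma_2=\{1\}\times\mathbb{R}^{p-1}$ and $\Delta_2=\{0\}\times\mathbb{R}^{p-1}$, so the coefficient on $Y_0$ is pinned to one and the $Y_0$--treatment interaction to zero; the LDV estimator takes $\Gamma_1=\mathbb{R}^{p}$ together with a matching $\Delta_1$, so that the $Y_0$-coefficient is estimated freely while the same interaction structure is retained. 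The only substantive difference between the two constraint sets is therefore confined to the first coordinate of~$\Gamma$.

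The verification of~\eqref{eq:nested-2} then has three parts. The nesting $\Gamma_1\supseteq\Gamma_2$ is immediate from $\mathbb{R}\supseteq\{1\}$ in the first coordinate and equality in the remaining ones, and the nesting $\Delta_1\supseteq\Delta_2$ is trivial (in fact an equality). The third requirement $\mathcal{U}(\Gamma_1)=\mathcal{U}(\Delta_1)$ is the structural condition that an $A$-interaction be included precisely on the coordinates where the corresponding main effect is unrestricted; it is this condition that forces the cross-term in the proof of Theorem~\ref{theorem:order-non-cent} to vanish and is what elevates the comparison from a distribution-dependent inequality to uniform dominance. Once all three items are in place, Theorem~\ref{theorem:order-non-cent} delivers $\tilde V_{\mathrm{LDV}}\leq \tilde V_{\mathrm{DiD}}$, which is the statement of the corollary.

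The only real obstacle is making sure that the LDV regression under consideration actually satisfies $\mathcal{U}(\Gamma_1)=\mathcal{U}(\Delta_1)$; this is the point where one must be careful when reading off $(\Gamma_1,\Delta_1)$ from Example~\ref{example:DiD}, whereas the nesting checks and the final appeal to Theorem~\ref{theorem:order-non-cent} are mechanical. Conceptually this structural identity is also the reason uniform dominance is available in randomized experiments, in contrast to the bracketing relation obtained by \citet{ding-li-2019} for the same pair of estimators in observational settings.
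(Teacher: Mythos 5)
Your overall strategy --- read the constraint sets off Example~\ref{example:DiD} and feed them into Theorem~\ref{theorem:order-non-cent} --- is exactly the route the paper intends, and the two nesting checks $\Gamma_1\supseteq\Gamma_2$ and $\Delta_1\supseteq\Delta_2$ are fine. The gap is the step you yourself flag as ``the only real obstacle'' and then never carry out: verifying $\mathcal{U}(\Gamma_1)=\mathcal{U}(\Delta_1)$. Your description of the larger model is self-contradictory on precisely this point. If ``the same interaction structure is retained'' as in DiD, then $\Delta_1=\{0\}\times\mathbb{R}^{p-1}$ as in Example~\ref{example:DiD}, so $\mathcal{U}(\Delta_1)=\{2,\dots,p\}$, while $\Gamma_1=\mathbb{R}^{p}$ gives $\mathcal{U}(\Gamma_1)=\{1,\dots,p\}$, and the third condition of \eqref{eq:nested-2} fails. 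If instead you take a ``matching'' $\Delta_1=\mathbb{R}^{p}$ so that the condition holds, you have added the $A$--$Y_0$ interaction and the larger model is no longer the lagged-dependent-variable regression; the dominance you then prove is a different (and essentially already known) comparison.

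This is not a bookkeeping issue that careful reading of Example~\ref{example:DiD} resolves. The condition $\mathcal{U}(\Delta_1)\supseteq\mathcal{U}(\Gamma_1)$ is exactly what annihilates the cross term $(\gamma_1-\gamma_2)_{\mathcal{U}(\Gamma_1)}^{T}E\{(1-2\pi)\epsilon_1 A X_{\mathcal{U}(\Gamma_1)}\}$ in the proofs of Theorems~\ref{theorem:var-order} and \ref{theorem:order-non-cent}, and for the LDV/DiD pair the first-order conditions \eqref{eq:eps-uncorrelated} control $E(\epsilon_1 A X_j)$ only for $j\in\{2,\dots,p\}$, leaving the lagged-outcome coordinate $j=1$ free. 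The cross term therefore reduces to $(1-2\pi)(\gamma_{1,1}-1)\,E(\epsilon_1 A X_1)$, which is generically nonzero when $\pi\neq 1/2$ and can be negative and dominate the nonnegative quadratic term. For instance, with $p=1$, $E(X^2)=1$, $m_a=E(XY\mid A=a)$ and $d_\gamma=\pi m_1+(1-\pi)m_0-1$, one finds $\pi(1-\pi)(V_2-V_1)=2(1-2\pi)(m_1-m_0)d_\gamma+d_\gamma^2$, which is negative for, e.g., $\pi=0.3$, $m_1=-10$, $m_0=41/7$ (and here $\tilde V_k=V_k$ since no interactions are present). To complete the argument you must either restrict to $\pi=1/2$, where the cross term vanishes and the rest of the computation in Theorem~\ref{theorem:order-non-cent} goes through, or supply a separate argument handling the $j=1$ coordinate; invoking Theorem~\ref{theorem:order-non-cent} verbatim with the constraint sets of Example~\ref{example:DiD} does not suffice.
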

\begin{remark}
	The condition in \eqref{eq:nested-2} might be further weakened when $\pi = 1/2$. In particular, the difference in \eqref{eq:diff_theorem2}
	is exactly $0$ if in addition, $\Gamma_1 =
	\Gamma_2$ and $\pi = 1/2$. To show this, by differentiating~\eqref{eq:beta_pop} with respect to
	$\gamma_{\mathcal{U}(\Gamma_1)}$, we have $	E\left[ X_{\mathcal{U}(\Gamma_1)} \{Y - \alpha - \beta A -
	\gamma_k^T X - A (\delta_k^T X)\} \right] = 0,~k=1,2$.
	By subtracting the two equations, we obtain
	\begin{equation}\label{eq:eq_T2_proof}
		E\left[ X_{\mathcal{U}(\Gamma_1)} (X^T d_{\gamma} + A X^T
		d_{\delta}) \right] = 0.
	\end{equation}
	Because $\Gamma_1 = \Gamma_2$ and $\Delta_1 \supseteq \Delta_2$, we
	have $\gamma_{1j} = \gamma_{2j}$ and $\delta_{1j} = \delta_{2j}$ for
	$j \not \in \mathcal{U}(\Gamma_1) = \mathcal{U}(\Delta_1)$. Thus
	$d_{\gamma,j} = d_{\delta,j} = 0$ when $j \not \in
	\mathcal{U}(\Gamma_1)$. Together with 
	equation \eqref{eq:eq_T2_proof},
	this shows that $\Sigma (d_{\gamma} + \pi d_{\delta}) =
	0$. Therefore, if we have $\pi = 1/2$ in addition, the difference
	$\tilde{V}_2 - \tilde{V}_1 = 0$. 
	In other words, when $\pi = 1/2$,
	adding or removing (more precisely, unrestricting or restricting)
	an interaction term $A X_j$ when corresponding the main effect $X_j$
	is already present in the model does not change the asymptotic
	variance of $\tilde{\beta}$.
\end{remark}
\section{Variance ordering beyond linear regression}
One can establish 
the uniform dominance 
in case of more sophisticated randomization schemes. In particular, our derivations extend to stratified randomization experiments with units grouped into $B$ strata which is in alignment with the results of \cite{liu2020regression}. The authors considered two asymptotic regimes with the number of strata $B$ or the their sample sizes to increase with the total number of units $n$. Let $S$ 
be the indicator variable of whether unit $i$ belongs to stratum $j$. In this setting, for the uniform dominance to hold, the treatment assignment should be completely randomized within strata with allocation probability equal across strata, and the estimator should be obtained using weighted regression including centred variables $A, S, X$, as well as interactions of $A$ with $S$ and $A$ with $X$.  

In contrast, the results do not usually extend to a more general assignment mechanism which depends on $X$ with $\pi(x) = p(A=1 \mid X=x)$ (see Table \ref{tab:emp_results}  below for counterexamples). Within this framework, we can identify $\beta_{\text{ATE}}$ 
for all $\Gamma, \Delta$, 
	but 
	$\hat{\beta}$ in \eqref{eq:beta_sample}, \eqref{eq:beta_pop} and \eqref{eq:beta_sample_nc} do not converge to $\beta_{\text{ATE}}$. 
	Assuming this scenario one can use weighted estimators \citep{stuart2011use,tao2019doubly} to recover \eqref{eq:ATE}. 

The result in Lemma \ref{lemma:alpha_beta} 
is tightly related to the properties of orthogonal projections and does not carry over to a wider class of generalized linear model, even if the link function we use is collapsible \citep[a link function is collapsible if including independent regressors does not change the population regression coefficients, see][for more detials]{greenland1999,collapsible}. The most common collapsible link functions in 
this setting are identity and $\log$ function; the latter is the canonical link for Poisson regression and is frequently applied. Yet, if the link is collapsible but not the identity, including the treatment-covariate interaction may identify a different estimand. 
Thus, an attempt to seek the uniform dominance 
by comparing the asymptotic variance of $\hat{\beta}$ 
seems to be an ill-posed research question in this setting.
\section{Simulation study}\label{sec:simulation_study}
We carry out numerical simulation study to verify our theoretical developments in previous sections and 
explore scenarios under which the uniform dominance does not hold (see discussion in Section \ref{sec:disc}). 
We consider scenarios with potential outcomes generated from normal and Poisson distribution. In all scenarios, the covariate is centred and normally distributed $X = X_{nc}-\bar{X}$, $X_{nc}\sim N(2,1)$ whereas $\bar{X}$ is a sample mean of observations; the observed outcome is $Y = A Y(1) +  (1-A) Y(0)  $; the treatment is assigned using a Bernoulli trial $A\sim Bernoulli(\pi)$. For linear regression, errors are normally distributed  $e_1, e_0\sim N(0,1)$. 
We consider the sample size of $n=1000$.  Below we describe simulation scenarios. 
\vspace{-0.1cm}
\begin{description}
	\item[Scenario 1] Treated outcomes: $Y(1)\sim 5 +2.5  AX + e_1$, untreated outcomes: $Y(0)\sim 3 + X + e_0$.
	\item[Scenario 2] Treated outcomes: $Y(1)\sim Poisson (\mu_1)$, untreated outcomes: $Y(0)\sim  Poisson (\mu_0)$, $\mu_1 =  \exp(3 + 0.6  X  A)$, $\mu_0 = \exp(1 + 0.6  X)$.
	\item[Scenario 3] Treated outcomes: $Y(1)\sim 7 + X + e_1$,  untreated outcomes: $Y(0)\sim 2 - X + X^2 + e_0$.
	\item[Scenario 4] The same as scenario 3, but with weights: $w = (\pi(X)(1-\pi(X))^{-1}$.
\end{description}
\vspace{-0.1cm}
To study the performance of the estimators of $\beta_{\text{ATE}}$, we calculated an average bias and a standard deviation over $1000$ Monte Carlo replications. Finally, under the Poisson $\log$ model, true values of $\beta_{\text{ATE}}$ was approximated by the difference of the large sample average ($n = 10^7$) of potential outcomes.

\begin{figure}[htb]
	\centering
	\subfloat{\includegraphics[width=0.45\textwidth]{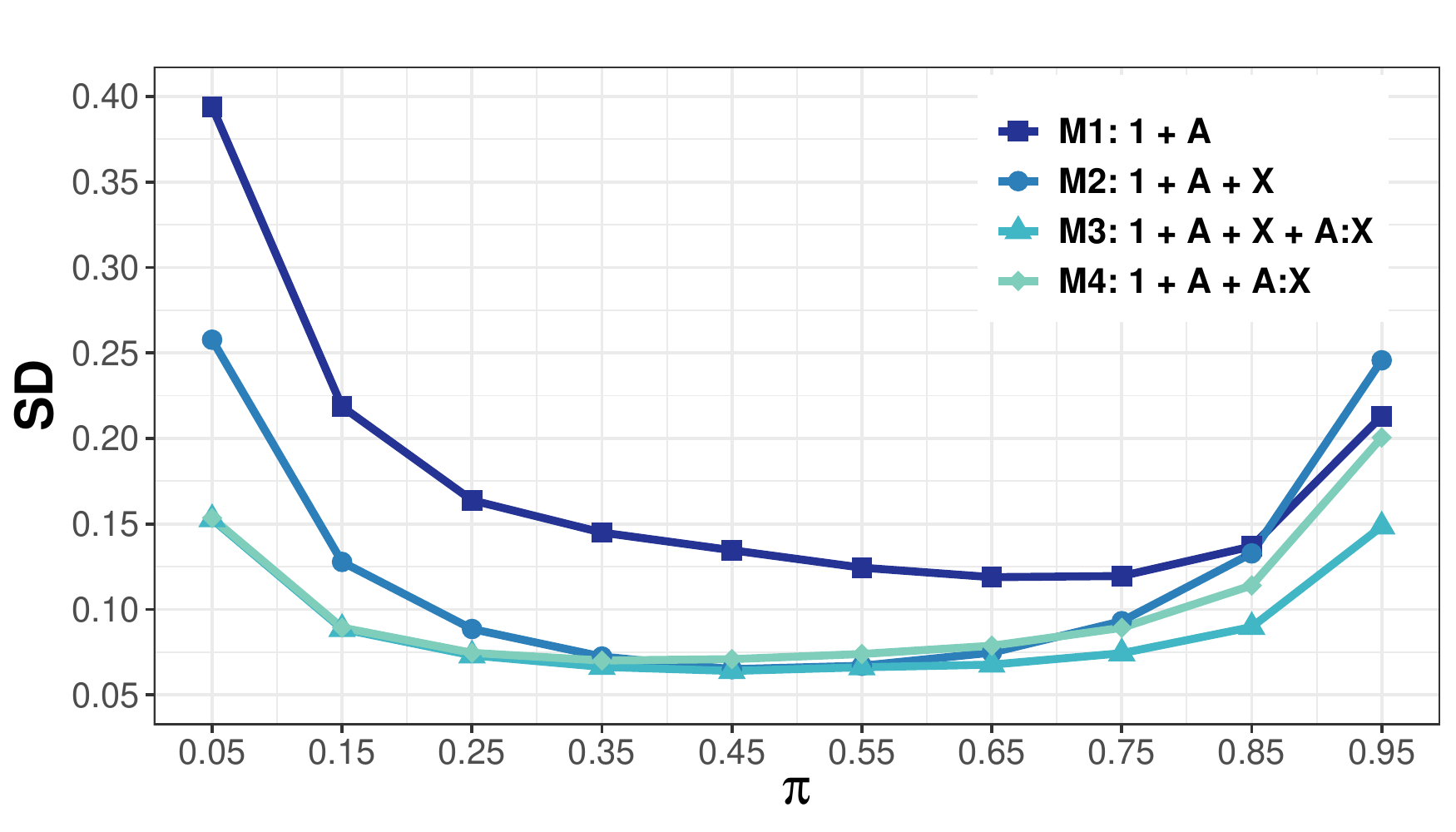}}
	\subfloat{\includegraphics[width=0.45\textwidth]{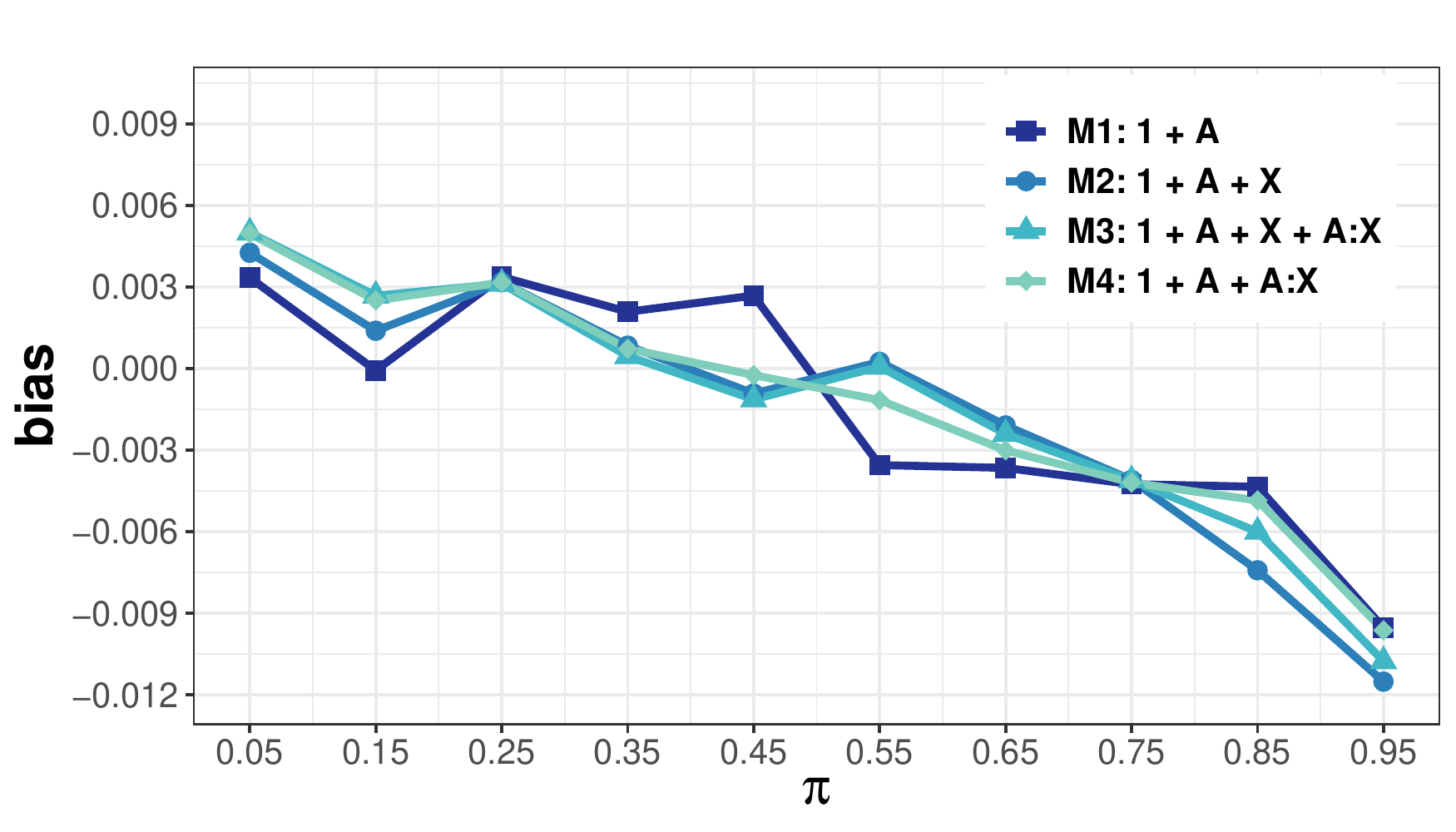}}\\
	\subfloat{\includegraphics[width=0.45\textwidth]{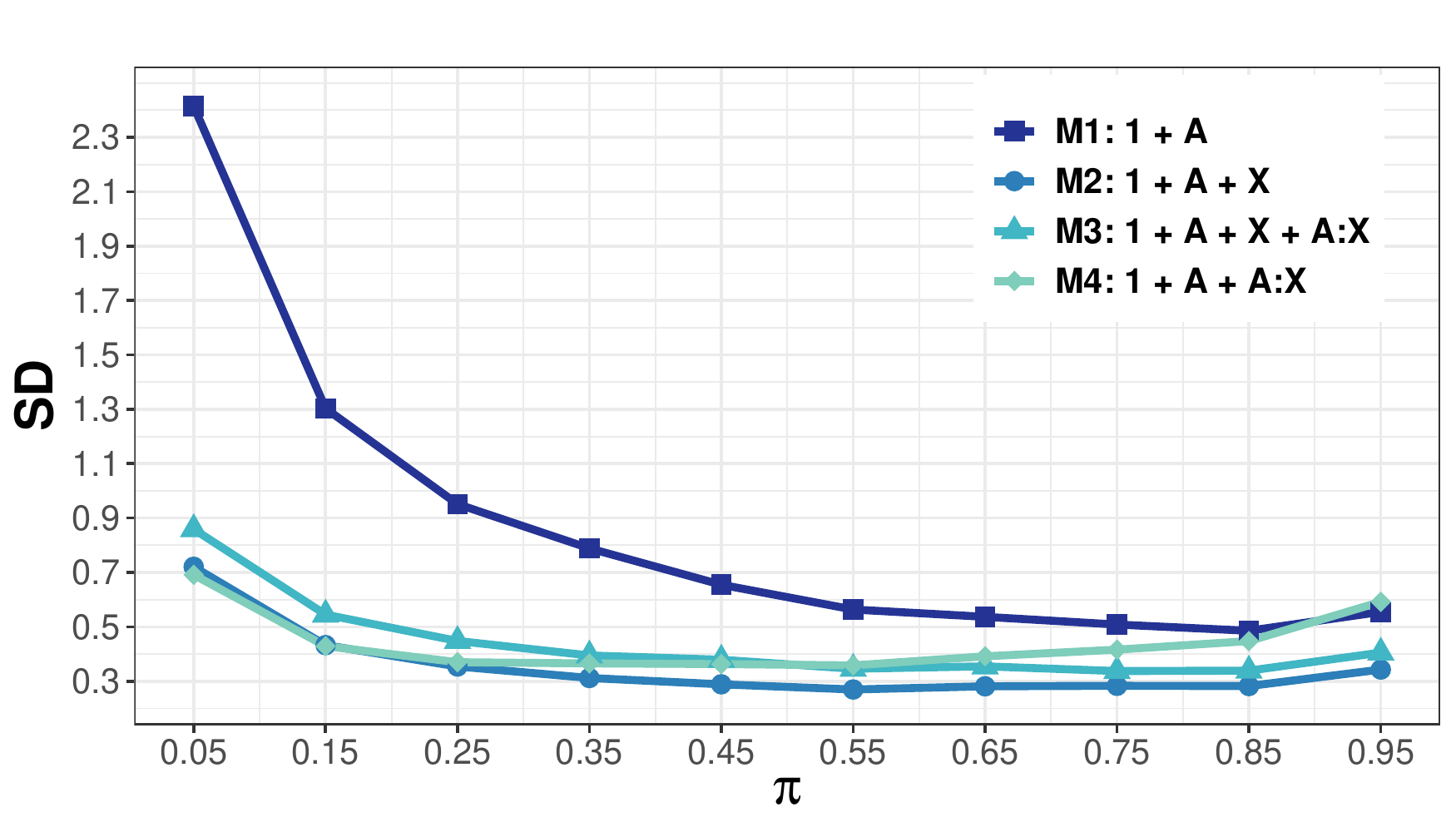}}   
	\subfloat{\includegraphics[width=0.45\textwidth]{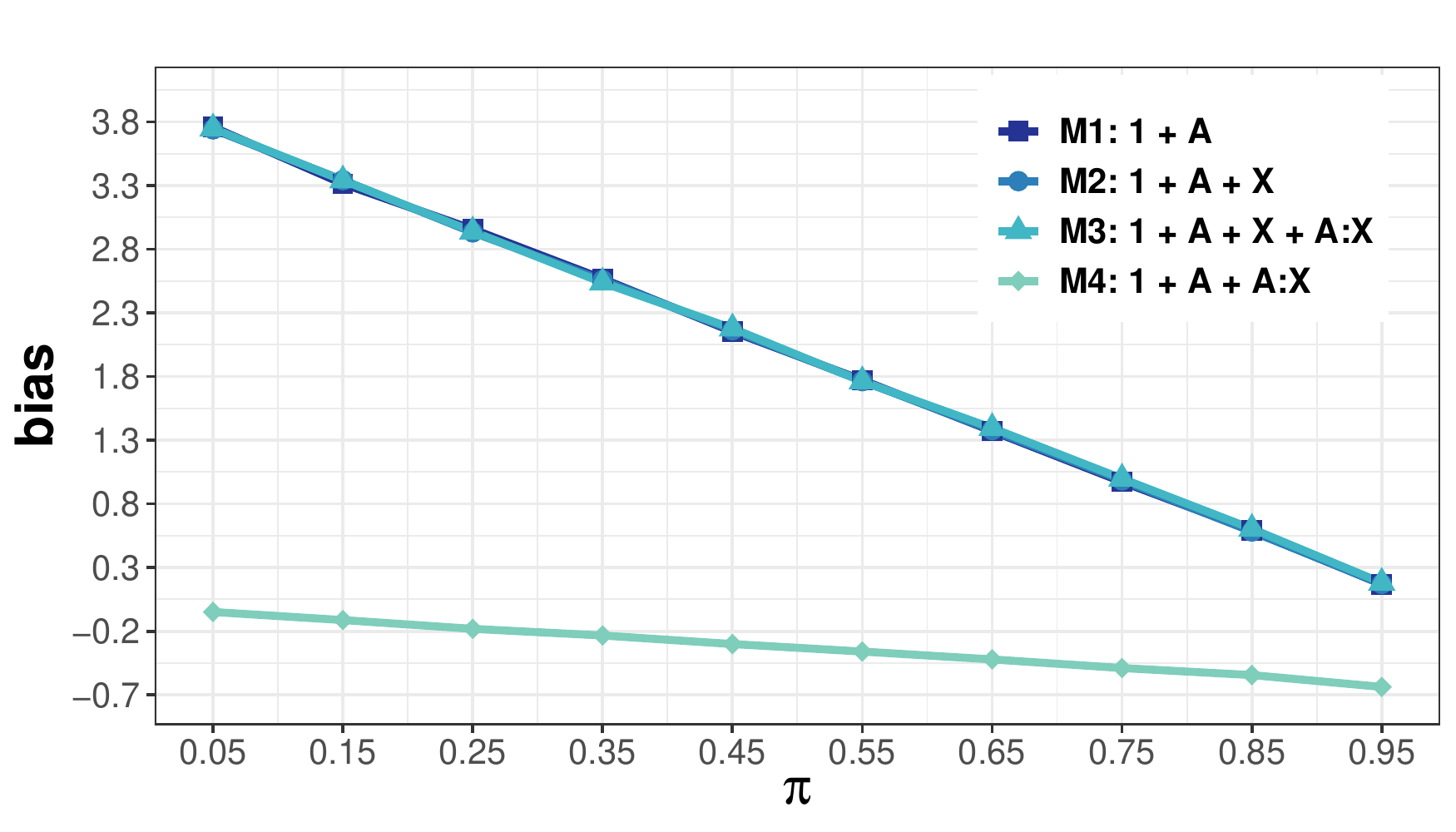}} 
	\captionof{figure}{Performance of estimators of $\beta_{\text{ATE}}$ over different values of $\pi$: 
		Scenario 1 (top panels), 
		Scenario 2 (bottom panels), SD: standard deviation.} 
	\label{fig1}
\end{figure}

Figure~\ref{fig1} shows results of simulations under Scenario 1 (top panels) and Scenario 2 (bottom panels). As for normally distributed outcomes, the estimates of $\beta_{\text{ATE}}$ are almost unbiased assuming any model (top-right panel). On the other hand, the standard deviation of $\hat{\beta}$ is the smallest for the model with both $X$ and $AX$, nevertheless adjusting for covariate only it is not beneficial for higher values of $\pi$. When it comes to the results under the Poisson $\log$ model, $\hat{\beta}$ is not the estimator of $\beta_{\text{ATE}}$ and the uniform dominance results do not hold.  
\begin{table}[htb]
		\centering
	\def~{\hphantom{0}}{
		\begin{tabular}{ccc ccc ccc}\hline
			\multicolumn{4}{c}{Scenario 3} & & \multicolumn{4}{c}{Scenario 4} \\
			\multicolumn{4}{c}{ $\beta_{\text{ATE}}  = 4$, $logit\{\pi(X)\}  = 4-2X$} & & \multicolumn{4}{c}{$\beta_{\text{ATE}}  = 4$, $logit\{\pi(X)\}  = 4-2X$, weights}  \\\hline
			\multicolumn{2}{l}{Model} & bias   &  SD &&     \multicolumn{2}{l}{Model} & bias   &  SD \\
			\multicolumn{2}{l}{$\sim 1 + A + X + A:X$} &  603  &  143 &&    \multicolumn{2}{l}{$\sim 1 + A + X + A:X$} & 153   & 440 \\
			\multicolumn{2}{l}{$\sim 1 + A  $} &  595  & 128  &&    \multicolumn{2}{l}{$\sim 1 + A $} &  -7  & 508  \\\hline
		\end{tabular}}
	\caption{Performance of $\hat{\beta}$. 	SD, standard deviation; all numerical entries are multiplied by 1000 and rounded to the nearest integer.}
	\label{tab:emp_results}
\end{table}
Table \ref{tab:emp_results} displays numerical performance of $\hat{\beta}$ 
under Scenarios 3 and 4 for which our theory does not hold. Unsurprisingly, assuming Scenario 3 with an assignment mechanism which depends on covariate $X$, $\hat\beta$ suffers from a substantial bias and the uniform dominance does not apply. Assuming Scenario 4, which involves weighting, the regression adjustment decreases the standard deviation of $\hat\beta$, but the estimator in the larger model suffers from a considerable bias. 

\section{Discussion} \label{sec:disc}
Linear regression model is still widely used 
to estimate the average treatment effect in hope to increase the precision of the estimator. 
We re-established and generalized previous results on linear-regression adjusted estimators under possible model misspecification by providing a simplified and more accessible proof of uniform dominance.   
Yet, our proof has a geometric element that exploits the linearity of the regression adjustment, and this cannot be extended to other settings. 
Thus, the phenomenon of the efficiency gain seems to be limited to the estimation problems which fit into  the linear framework and to the treatment assignment mechanisms which do not depend on $X$. 

\appendix
	\section{Proofs}\label{sec:appendix}
	\subsection{Proof of Lemma \ref{lemma:alpha_beta}}
	\begin{proof}
		Observe that $\alpha$ and $\beta$ are always unrestricted in
		\eqref{eq:beta_pop}. By taking partial derivatives with respect to
		$\alpha$ and $\beta$, we obtain
		\[
		E\{Y-\alpha-\beta A - \gamma^T X - A (\delta^T X) \}=0
		~\text{and}~ E\{A(Y-\alpha-\beta A - \gamma^T X - A (\delta^T X)
		)\}=0.
		\]
		By multiplying the first equation by $\pi$ and subtracting the
		second equation, we obtain
		\begin{align*}
			&\pi E(Y) - E(AY) + \{E(A^2) - \pi E(A)\}\beta \\
			&+ \{E(AX) - \pi E(X)\}^T \gamma + \{E(A^2 X) - \pi E(AX)\}^T\delta = 0.
		\end{align*}
		Finally, by using the assumption that the treatment
		is randomized i.e.\ $A\perp \!\!\! \perp X$ and $E(A) = E(A^2) =
		\pi$, we find that
		\begin{align*}
			\beta &= - \frac{\{\pi E(Y) - E(AY)\} + (\pi - \pi^2) \delta^T
				E(X)}{\pi - \pi^2} \\
			&= - \frac{\pi\{\pi E(Y \mid A = 1) + (1 - \pi) E(Y \mid A =
				0)\} - \pi E(Y \mid A = 1)}{\pi - \pi^2}- \delta^T E(X)\\
			&= \beta_{\text{ATE}} - \delta^T E(X)
		\end{align*}
		as desired.
	\end{proof}
	
	\subsection{Proof of Lemma \ref{lemma:cons-an}}
\begin{proof}
To be able to use the results of \cite{white1980heteroskedasticity}, we need to check that our assumptions are sufficient to evoke regularity conditions cited by the author. Since $Y$ and $Z$ have bounded forth moments, there exist $\eta$ and $H$ such that 
$E(|\epsilon(\theta)^2|^{\eta+1})<H$ and
$E(|Z_{j}Z_{k}|^{\eta+1})<H$, $j,k=1,\dots,p$. By H\"{o}lder
inequality, this implies that
$E(\epsilon(\theta)^2|Z_{j}Z_{k}|^{\eta+1})<H$ is also
uniformly bounded. Furthermore, we assumed that $E(ZZ^t)$ is
positive definite, that is, $E(ZZ^t)$ is non-singular and $det
E(ZZ^t)>\eta>0$. The same is valid for
$E(\epsilon(\theta)^2ZZ^t)$. Thus, Assumptions 2 and 3 of
\cite{white1980heteroskedasticity} are satisfied and we can
use the same steps as the author to prove the consistency and
the asymptotic normality of $\hat{\theta}$.
\end{proof}
\subsection{Proof of Lemma \ref{lem:beta-var}}
\begin{proof}
First, consider the unrestricted case where $\Gamma = \Delta =
\mathbb{R}^p$. To use Lemma \ref{lemma:cons-an}, 
we simply need to
compute $E(ZZ^T)$ and $E(ZZ^T\epsilon^2)$ for
$Z=(1,A,X^T,AX^T)^T$. Employing $A\perp\!\!\!\perp X$, $E(Z)=0$,
$A^2=A$, and $E(XX^T)=\Sigma$, we have
\begin{equation*}
	E(ZZ^T)=
	\begin{pmatrix}
		1   & \pi & 0 & 0\\
		\pi & \pi & 0 & 0\\
		0   & 0   & \Sigma    & \pi\Sigma\\
		0   & 0   & \pi\Sigma & \pi\Sigma
	\end{pmatrix}.
\end{equation*}
Using properties of block diagonal matrices, it follows that
\begin{equation*}
	\begin{split}
		n\text{var}(\hat{\beta})\to
		\begin{bmatrix}
			\begin{pmatrix}
				1  & \pi \\
				\pi& \pi \\
			\end{pmatrix}^{-1}
			\begin{pmatrix}
				E(\epsilon^2)  & E(A\epsilon^2) \\
				E(A\epsilon^2)& E(A\epsilon^2)\\
			\end{pmatrix}
			\begin{pmatrix}
				1  & \pi \\
				\pi& \pi \\
			\end{pmatrix}^{-1}
		\end{bmatrix}_{(22)}
		=\frac{E\{(A-\pi)^2\epsilon^2 \}}{\pi^2(1-\pi)^2}.
	\end{split}
\end{equation*}
Here, $[\cdot]_{(22)}$ means the entry on the second row and second
column of the matrix.

If some dimensions of $\Gamma$ or $\Delta$ are singletons, we can
simply remove the corresponding entries in $Z$. By a similar
calculation, the same formula holds and the asymptotic variance of
$\hat{\beta}$ only differs in the regression error $\epsilon$, which
depends on $\theta = \theta(\Gamma,\Delta)$.
\end{proof}

\subsection{Proof of Lemma \ref{lemma:var-beta-jk-tilde}}\label{sec:proof_lemma4}
\begin{proof}
	We fix $\Gamma$ and $\Delta$ and suppress the dependence of
	$\hat{\beta}$ and $\tilde{\beta}$ on $(\Gamma,\Delta)$. We decompose
	$\tilde{\beta}$ as
	\[
	\tilde{\beta}=\hat{\beta}+\hat{\delta}^T\bar{X} =
	\hat{\beta}+\delta_s^T\bar{X}+(\hat{\delta}-\delta_s)^T\bar{X}.
	\]
	Due to the assumption $E(X) = 0$, we have $\hat{\delta}-\delta_s = O_p(n^{-1/2})$ and $\bar{X} = O_p(n^{-1/2})$ . Hence, the last term on the right hand side is negligible. Therefore,
	\[
	n\left\{\text{var}(\tilde{\beta})-\text{var}(\hat{\beta})\right\} \to
	n \text{var}(\delta_s^T \bar{X}) + 2 n \text{cov}(\hat{\beta}, \delta_s^T
	\bar{X}).
	\]
	
	Let $Z$ be the unrestricted variables in the linear regression. By
	applying the sandwich variance formula for the following set of
	estimating equations
	\[
	\label{eq:estimating-equation}
	\begin{split}
		E(X - \mu) = 0, \\
		E[Z \epsilon(\theta)] = 0,
	\end{split}
	\]
	we obtain
	\begin{align*}
		n\text{cov}\left(\hat{\theta},\bar{X} \right) &\to
		\left\{
		\begin{pmatrix}
			- I & 0 \\
			0 & - E(Z Z^T) \\
		\end{pmatrix}^{-1}
		\begin{pmatrix}
			\Sigma & E(X Z^T {\epsilon}) \\
			E(Z X^T {\epsilon}) & E(Z Z^T {\epsilon}^2) \\
		\end{pmatrix}
		\begin{pmatrix}
			- I & 0 \\
			0 & - E(Z Z^T) \\
		\end{pmatrix}^{-T}
		\right\}_{(2)} \\
		&= \left\{E(Z Z^T)  \right\}^{-1} E(Z X^T {\epsilon}),
	\end{align*}
	where $\epsilon$ is 
	${\epsilon}(\theta) = \epsilon(\theta_f)
	+ (\gamma_f - \gamma_s)^T X + A (\delta_f - \delta_s)^T X$. It follows that
	\begin{equation*}
		\begin{split}
			n\text{cov}\left(\hat{\beta},\bar{X}  \right) &\rightarrow \left[
			\left\{E(Z Z^T)  \right\}^{-1} E(Z X^T \epsilon)  \right]_{(22)} \\
			&= \left[
			\begin{pmatrix}
				1   & \pi & 0 \\
				\pi & \pi & 0 \\
				0 & 0 & * \\
			\end{pmatrix}^{-1}
			\begin{pmatrix}
				(\gamma_f - \gamma_s)^T \Sigma + \pi (\delta_f - \delta_s)^T
				\Sigma \\
				\pi (\gamma_f - \gamma_s)^T \Sigma + \pi (\delta_f -
				\delta_s)^T \Sigma\\
				*
			\end{pmatrix}
			\right]_{(22)} \\
			&= (\delta_f - \delta_s)^T \Sigma,
		\end{split}
	\end{equation*}
	where $*$ represent some unspecified matrices that are not important
	for deriving the quantities of interest. Therefore,
	\[
	\begin{split}
		n\left\{\text{var}(\tilde{\beta})-\text{var}(\hat{\beta})\right\} &\to
		n \text{var}(\delta_s^T \bar{X}) + 2 n \text{cov}(\hat{\beta}, \delta_s^T
		\bar{X}) \\
		&= \delta_s^T \Sigma \delta_s + 2 (\delta_f - \delta_s)^T \Sigma
		\delta_s \\
		&= \delta_s^T \Sigma (2 \delta_f - \delta_s).
	\end{split}
	\]
\end{proof}

\subsection{Variance orderings in Table \ref{tab:table-comparisons}} \label{sec:var-ord-table}
In this section, we provide an alternative, simpler proof to derive variance ordering in Table~\ref{tab:table-comparisons}. These conclusions can be derived from Theorems~\ref{theorem:var-order} and \ref{theorem:order-non-cent}.  Let  $\theta_f =
\theta_0(\mathbb{R}^p, \mathbb{R}^p)$ denote the full model
parameters. From the first order condition, we have
\begin{equation}\label{eq:gamma0_delta0_f}
\gamma_f= \Sigma^{-1} E(X Y\mid A=0), \quad \delta_f = \Sigma^{-1}
\{E(X Y\mid A=1)  - E(X Y\mid A=0) \}.
\end{equation}
Let $V_f = \lim_{n \to \infty} n \text{var}(\hat{\beta}_f)$ be the
asymptotic variance of $\hat{\beta}_f$.

(a) We compare variances of $\hat{\beta}$ in Model 1 with $Z_1 = (1, A, X^T)^T$ and in Model 2 with $Z_1 = (1, A)^T$. Consider $ \Gamma_1= \mathbb{R}^p  $, $ \Delta_1=0 $, $ \Gamma_2= 0$ and $ \Delta_2= 0$, that is, $ \hat\beta_1 $ is the ANCOVA and $\hat \beta_2 $ is the ANOVA estimator in Example 1 in the main document. In this case, only the third condition $\mathcal{U}(\Delta_1) \supseteq \mathcal{U}(\Gamma_1)$ in Theorem 1 is not satisfied. We show that $ V_1> V_2 $ when $\pi E(XY\mid A=0)  = (\pi - 1) E(X Y\mid A=1 ) $, $
\pi\neq \frac{1}{2} $,  and $ E(X Y\mid A=1 ) \neq  E(XY\mid A=0)
$.

By definition, $ \delta_1 =\delta_2= \gamma_2= 0 $, and  $ \gamma_1
= \Sigma^{-1} E(XY)= \gamma_f + \delta_f \pi$. Then, by applying derivations in Theorem 1, 
we have
\begin{align*}
\pi^2(1-\pi)^2( V_2 - V_f  )
&= E\big[ (A- \pi )^2 \{ \gamma_f^T X +A \delta_f ^T X\}^2  \big],\\
\pi^2(1-\pi)^2(         V_1 - V_f  )&= E\big[ (A- \pi )^2 \{
(\gamma_f- \gamma_1)^T X
+ A \delta_f ^T  X\}^2
\big],\\
&=  E\big[ (A- \pi )^2 (\gamma_1^T X )^2  \big] +  E\big[ (A- \pi )^2 (\gamma_f^T X+ A\delta_f^T X )^2  \big]\\
&\qquad  - 2E\big[ (A- \pi )^2 \gamma_1^TX (\gamma_f^T X+ A\delta_f^T X ) \big].
\end{align*}
Hence, using the fact that $ A\perp \!\!\! \perp X $,
\begin{align*}
&\pi^2(1-\pi)^2( V_1- V_2 ) \\
=& E\big[ (A- \pi )^2 (\gamma_1^T X
)^2  \big]  - 2E\big[ (A- \pi )^2
\gamma_1^TX (\gamma_f^T X+
A\delta_f^T X ) \big]\\
=& E \big[ (A - \pi)^2 (\gamma_1^T X) (\gamma_1 - 2 \gamma_f -
2 A \delta_f)^T X \big] \\
=& E\big[ (A- \pi )^2 (\gamma_f^T X + \pi \delta_f^TX ) ( \pi \delta_f^TX - \gamma_f^T X  - 2 A\delta_f^T X )  \big] \\
=& E\big[A (1- \pi )^2 (\gamma_f^T X + \pi \delta_f^TX ) (\pi \delta_f^TX - \gamma_f^T X  - 2 \delta_f^T X )  \big] \\
&\qquad +E\big[(1-A) \pi ^2 (\gamma_f^T X + \pi \delta_f^TX ) ( - \gamma_f^T X + \pi \delta_f^TX   )  \big] \\
=& \pi (1- \pi )^2 (\gamma_f + \pi \delta_f)^T \Sigma (\pi
\delta_f - \gamma_f - 2 \delta_f) +(1-\pi) \pi ^2 (\gamma_f + \pi \delta_f)^T \Sigma (\pi
\delta_f - \gamma_f) \\
=& \pi(1-\pi) (\gamma_f + \pi
\delta_f)^T \Sigma \{(3\pi - 2) \delta_f -
\gamma_f\}.
\end{align*}
When $ \gamma_f = (\pi - 1) \delta_f , \pi\neq \frac12$ and $
\delta_f\neq 0 $, we have
\begin{equation}\label{eq:anova-ancova}
\pi^2(1-\pi)^2( V_1- V_2)= \pi
(1-\pi) (2\pi-1)^2  E(\delta_f^2) >  0.
\end{equation}
Under this scenario, $(\tilde{V}_1, \tilde{V}_2)=(V_1, V_2)$. We can thus proceed in the same way to prove $\tilde{V}_1> \tilde{V}_2 $.

(b)  We compare variances of $\hat{\beta}$ in Model 1 with $Z_1 = (1, A,A X^T)^T$ and in Model 2 with $Z_1 = (1, A)^T$. First we shall prove $ V_2 \geqslant V_1$. Consider $ \Gamma_1= 0  $, $ \Delta_1=\mathbb{R}^p $, $ \Gamma_2= 0$ and $ \Delta_2= 0$.  In this case we have $\gamma_1 = \gamma_2 = \delta_2 = 0$, $\delta_1 = \Sigma^{-1}E(XY|A=1)$ and
\begin{equation}\label{eq:comparison_2}
\begin{split}
\pi^2(1-\pi)^2(V_2 - V_1)
=&E\left\{(A-\pi)^2
2\epsilon_1  A \delta_1^TX \right\} + E\left\{(A-\pi)^2
(A \delta_1^TX)^2\right\}\geqslant0.
\end{split}
\end{equation}
The first term on the right hand side in \eqref{eq:comparison_2} is $0$ by applying the sufficient condition in Theorem 1, 
$A \perp \!\!\! \perp X$ and $A^2 = A$.

Now we prove that $\tilde{V}_1 > \tilde{V}_2$ for some cases. Let $\gamma_f$ and $\delta_f$ be as defined in equation \eqref{eq:gamma0_delta0_f}, $\gamma_1 = \gamma_2 = \delta_2 = 0$ and $\delta_1 = \Sigma^{-1}E(XY|A=1) = \delta_f + \gamma_f\neq0$. In addition, let $\Omega_{l}=E(XY|A=l)$ where $l=0,1$. When $\Omega_0=-1/2\Omega_1$, we have
\begin{align}
\tilde{V}_1 - \tilde{V}_2
& \rightarrow V_1 + \delta_1^T\Sigma(2\delta_f-\delta_1) - V_2\nonumber\\
& =     \frac{E\{(A-\pi)^2 (A\delta_1^TX)^2\}}{\pi^2(1-\pi)^2}   + \delta_1^T\Sigma(2\delta_f-\delta_1)  \nonumber\\
& = \frac{E\{
(\gamma_f^TX+\delta_f^TX )^2
\}}{\pi}  + (\delta_f+\gamma_f)^T\Sigma(\delta_f-\gamma_f)  \nonumber\\
& =  \frac{(\delta_f+\gamma_f)^T\Sigma
(\delta_f+\gamma_f)+ \pi (\delta_f+\gamma_f)^T\Sigma(\delta_f-\gamma_f)}{\pi}  \nonumber\\
& = \frac{\Omega_1^T\Sigma^{-1}\Omega_1+ \pi \Omega_1^T\Sigma^{-1}(\Omega_1-2\Omega_0)}{\pi}  = \frac{\Omega_1^T\Sigma^{-1}\Omega_1+ 2\pi \Omega_1^T\Sigma^{-1}\Omega_1}{\pi} > 0 \label{eq:comparion3}.
\end{align}
\bibliography{literature-review}

\end{document}